\newcommand{\cmark}{\ding{51}}%
\newcommand{\xmark}{\ding{55}}%
\newtheorem{theorem}{Theorem}
\newtheorem{lemma}{Lemma}
\newcommand{\fangyu}[1]{{\color{black}#1}}
\newcommand{\edits}[1]{{\color{black}#1}}
\newcommand{\arxiv}[1]{\textcolor{black}{#1}}
\def\BibTeX{{\rm B\kern-.05em{\sc i\kern-.025em b}\kern-.08em
    T\kern-.1667em\lower.7ex\hbox{E}\kern-.125emX}}
\begin{document}

\title{Scaling Blockchain Consensus via\\a Robust Shared Mempool
\thanks{\edits{$\dagger$These authors have contributed equally to this work. Corresponding author: Jianyu Niu.
}}}

\author{\IEEEauthorblockN{Fangyu Gai$^{1,\dagger}$, Jianyu Niu$^{2,\dagger}$, 
Ivan Beschastnikh$^3$, Chen Feng$^1$, Sheng Wang$^4$}
\IEEEauthorblockA{$^1$\{fangyu.gai, chen.feng\}@ubc.ca \hspace{0.3cm}$^2$niujy@sustech.edu.cn \hspace{0.3cm} $^3$bestchai@cs.ubc.ca\hspace{0.3cm}$^4$sh.wang@alibaba-inc.com}
\IEEEauthorblockA{University of British Columbia ($^1$Okanagan Campus, $^3$Vancouver Campus) \\ $^2$Southern University of Science and Technology \hspace{0.3cm} $^4$Alibaba Group}
}

\maketitle
\thispagestyle{plain}
\pagestyle{plain}

\begin{abstract}
%There is a resurgence of interest in Byzantine fault-tolerant (BFT)  systems due to blockchains.
%However, 
Leader-based Byzantine fault-tolerant (BFT) consensus protocols used by permissioned blockchains have limited scalability and robustness.
To alleviate the leader bottleneck in BFT consensus, we introduce \emph{Stratus}, a robust shared mempool protocol that decouples transaction distribution from consensus.
Our idea is to have replicas disseminate transactions in a distributed manner and have the leader only propose transaction ids.
Stratus uses a provably available broadcast (PAB) protocol to ensure the availability of the referenced transactions. 
To deal with unbalanced load across replicas, Stratus adopts a distributed load balancing protocol.

We implemented and evaluated Stratus by integrating it with state-of-the-art BFT-based blockchain protocols.
Our evaluation of these protocols in both LAN and WAN settings shows that Stratus-based protocols achieve $5\times$ to $20\times$ higher throughput than their native counterparts in a network with hundreds of replicas.
In addition, the performance of Stratus degrades gracefully in the presence of network asynchrony, Byzantine attackers, and unbalanced workloads.

%Our design provides easy-to-use APIs so that other BFT systems suffering from leader bottlenecks can use Stratus.
\end{abstract}

\begin{IEEEkeywords}
Blockchain, Byzantine fault-tolerance, leader bottleneck, shared mempool.
\end{IEEEkeywords}

\section{Introduction} \label{sec:intro}
The emergence of blockchain technology has revived interest in Byzantine fault-tolerant (BFT) systems~\cite{falcondb,blockchaindb,hyperledger,slimchain,fireledger}.
Unlike traditional distributed databases, BFT systems (or blockchains) provide data provenance and allow federated data processing in untrusted and hostile environments~\cite{dbandblockchain,basil}.
This enables a rich set of decentralized applications, in e.g., finance~\cite{finance}, gaming~\cite{kitties}, healthcare~\cite{healthcare}, and social media~\cite{steemit}.
Many companies and researchers are seeking to build enterprise-grade blockchain systems~\cite{Tendermint,diem,Dapper,algorand} to provide Internet-scale decentralized services~\cite{decentralizedservice}.

The core of a blockchain system is the BFT consensus protocol, which allows distrusting parties to replicate and order a sequence of transactions.
Many BFT consensus protocols~\cite{sbft,tendermint-gossip,hotstuff,streamlet} adopted by permissioned blockchains follow the classic leader-based design of PBFT~\cite{pbft}: only the leader node determines the order to avoid conflicts.
We call such protocols leader-based BFT protocols, or \underline{L}BFT.

In the normal case (Byzantine-free), an LBFT consensus instance roughly consists of a proposing phase and a commit phase.
In the proposing phase, the leader pulls transactions from its local transaction pool (or mempool), forms a proposal, and broadcasts the proposal to the other replicas.
On receiving a proposal, replicas verify the proposal content before entering the commit phase.
In the commit phase, the leader coordinates multiple rounds of message exchanges to ensure that all correct replicas commit the same proposal at the same position.
If the leader behaves in a detectable Byzantine manner, a view-change sub-protocol will be triggered to replace the leader with one of the replicas.

A key scalability challenge for LBFT is the leader bottleneck.
Since the proposing and commit phases are both handled by the leader, adding replicas increases the load on the leader and reduces performance.
For example, in a LAN environment, the throughput of LBFT protocols drops from 120K tps (transaction per second) with 4 replicas to 20K tps with 64 replicas, while the transaction latency surges from $9$ milliseconds to $3$ seconds~\cite{bamboo}.
This has also been documented by other work~\cite{narwal,rcc,leopard}.

Prior work has focused on increasing LBFT performance by improving the \emph{commit phase}, e.g., reducing message complexity~\cite{hotstuff}, truncating communication rounds~\cite{zyzzyva}, and enhancing tolerance to Byzantine faults~\cite{fastbft,sft}.
Recent works~\cite{narwal,leopard} reveal that a more significant factor limiting LBFT's scalability lies in the \emph{proposing phase}, in which a proposal with batched transaction data (e.g., 10 MB) is disseminated by the single leader node, whereas messages exchanged in the commit phase (e.g., signatures, hashes) are much smaller (e.g., 100 Byte).
\arxiv{Formal analysis in Appendix~\ref{sec:leader-bottleneck-analysis} shows that reducing the message complexity of the commit phase cannot address this scalability issue.}

More broadly, previous works to address the leader bottleneck have proposed \textit{horizontal scaling} or sharding the blockchain into shards that concurrently run consensus~\cite{omniledger,sharper,monoxide,blockchaindb}.
These approaches require a large network to ensure safety~\cite{gearbox} and demand meticulous coordination for cross-shard transactions. By contrast, \textit{vertical scaling} approaches employ hierarchical schemes to send out messages and collect votes~\cite{pigpaxos,kauri}. Unfortunately, this increases latency and requires complex re-configuration to deal with faults.

In this paper, we follow neither of the above strategies.
Instead, we introduce the \textit{shared mempool} (SMP) abstraction, which decouples transaction distribution from consensus, leaving consensus with the job of ordering transaction ids.
SMP allows every replica to accept and disseminate client transactions so that the leader only needs to order transaction ids.
Applying SMP reaps the following benefits.
\emph{First}, SMP reduces the proposal size and increases throughput.
\emph{Second}, SMP decouples the transaction synchronization from ordering so that non-leader replicas can help with transaction distribution.
\emph{Lastly}, SMP can be integrated into existing systems without changing the consensus core.

\fangyu{SMP has been used to improve scalability~\cite{s-paxos,narwal,leopard}, but prior work has passed over two challenges.
Challenge 1: ensuring the availability of transactions referenced in a proposal.
When a replica receives a proposal, its local mempool may not contain all the referenced transactions.
These missing transactions prevent consensus from entering the commit phase, which may cause frequent view-changes (Section~\ref{sec:eval-byz}).
Challenge 2: dealing with unbalanced load across replicas.
SMP distributes the load from the leader and lets each replica disseminate transactions.
But, real workloads are highly skewed~\cite{skewed}, overwhelming some replicas and leaving others under-utilized (Section~\ref{sec:eval-unbalance}).
Existing SMP protocols ignore this and assume that each client sends transactions to a uniformly random replica~\cite{leopard,narwal,s-paxos}, but this assumption does not hold in \edits{practical} deployments~\cite{ethna,info-propagation,txprobe,Miller2015DiscoveringB}.}

\fangyu{We address these challenges with \emph{Stratus}, an SMP implementation that scales leader-based blockchains to hundreds of nodes.
Stratus introduces a \textit{provably available broadcast} (PAB) primitive to ensure the availability of transactions referenced in a proposal.
With PAB, consensus can safely enter the commit phase and not block on missing transactions.
To deal with unbalanced workloads, Stratus uses a distributed load-balancing (DLB) co-designed with PAB.
DLB dynamically estimates a replica's workload and capacity so that overloaded replicas can forward their excess load to under-utilized replicas.}
To summarize, we make the following contributions:
\begin{itemize}[leftmargin=*]
    \item We introduce and study a shared mempool abstraction that decouples network-based synchronization from ordering for leader-based BFT protocols.
    To the best of our knowledge, we are the first to study this abstraction explicitly.
    \item To ensure the availability of transactions, we introduce a broadcast primitive called PAB, which allows replicas to process proposals without waiting for transaction data.
    \item To balance load across replicas, we introduce a distributed load-balancing protocol co-designed with PAB, which allows busy replicas to transfer their excess load to under-utilized replicas.
    \item We implemented Stratus and integrated it with HotStuff~\cite{hotstuff},  Streamlet~\cite{streamlet}, and PBFT~\cite{pbft}.
    We show that Stratus-based protocols substantially outperform the native protocols in throughput, reaching up to $5\times$ and $20\times$ in typical LANs and WANs with 128 replicas.
    %Compared with a straw-man SMP protocol, the throughput of Stratus degrades gracefully (by $20\%$ to $30\%$) with $30\%$ malicious replicas while the system throughput of its counterpart could drop to zero.
    Under unbalanced workloads, Stratus achieves up to $10\times$ more throughput.
    % In addition, Stratus achieves moderate performance in the presence of Byzantine senders and unbalanced workloads.
\end{itemize}

% The rest of the paper is organized as follows.
% We present the related work in Section~\ref{sec:related-work}.
% Section~\ref{sec:shared-mempool} gives a system overview of the shared mempool abstraction.
% Section~\ref{sec:dissemination} presents transaction dissemination techniques in our design.
% Section~\ref{sec:load-balance} shows the load-balancing strategies.
% Sections~\ref{sec:implementation} and~\ref{sec:evaluation} exhibit the implementation details and the experimental results, respectively.
% Finally, we conclude our paper in Section~\ref{sec:conclusion}.

% \section{Background and Related Work}\label{sec:related-work}
\section{Related Work}\label{sec:related-work}

One classic approach that relieves the load on the leader is \textit{horizontal scaling}, or sharding~\cite{sharper,blockchaindb,omniledger}.
% This is widely used in distributed databases, in which replicas are horizontally assigned into multiple shards, each managed by a subset of the nodes to handle a share of workloads.
However, using sharding in BFT consensus requires inter-shard and intra-shard consensus, which adds extra complexity to the system.
An alternative, \textit{vertical scaling} technique has been used in PigPaxos~\cite{pigpaxos}, which replaced direct communication between a Paxos leader and replicas with relay-based message flow.

Recently, many scalable designs have been proposed to bypass the leader bottleneck.
Algorand~\cite{algorand} can scale up to tens of thousands of replicas using Verifiable Random Functions (VRFs)~\cite{vrf} and a novel Byzantine agreement protocol called BA$\star$.
For each consensus instance, a committee is randomly selected via VRFs to reach consensus on the next set of transactions.
Some protocols such as HoneyBadger~\cite{honeybadger} and Dumbo~\cite{dumbo} adopt a leader-less design in which all the replicas contribute to a proposal.
They are targeting on consensus problems under asynchronous networks, while our proposal is for partially synchronous networks.
Multi-leader BFT protocols~\cite{rcc,fnf,mir-bft} have multiple consensus instances run concurrently, each led by a different leader.
\edits{Multi-leader BFT protocols such as MirBFT~\cite{mir-bft} and RCC~\cite{rcc} use multiple consensus instances that are run concurrently by different leaders.
% MirBFT and Stratus both target on the leader bottleneck but follow different approaches.
These protocols follow a monolithic approach and introduce mechanisms in the view-change procedure to deal with the ordering across different instances and during failures.
These additions render a BFT system more error-prone and inefficient in recovery.
% such as pre-ordering of transactions at different leaders and additional mechanisms to deal with failure.
% In addition, MirBFT will trigger view changes more frequently: even if one replica leader behaves maliciously, a view-change for the entire epoch will be triggered.
% While in a single-leader protocol, only the leader's failure will trigger the view-change.
Stratus-enabled protocols are agnostic to the view-change since Stratus does not modify the consensus core.}

Several proposals address the leader bottleneck in BFT, and we compare these in Table~\ref{table:comparison}.
Tendermint uses gossip to shed the load from the leader.
Specifically, a block proposal is divided into several parts and each part is gossiped into the network.
Replicas reconstruct the whole block after receiving all parts of the block.
The most recent work, Kauri~\cite{kauri}, follows the \textit{vertically scaling} approach by arranging nodes in a tree to propagate transactions and collect votes.
It leverages a pipelining technique and a novel re-configuration strategy to overcome the disadvantages of using a tree structure. 
However, Kauri's fast re-configuration requires a large fan-out parameter (that is at least larger than the number of expected faulty replicas), which constrains its ability to load balance.
In general, tree-based approaches increase latency and require complex re-configuration strategies to deal with faults.

\fangyu{To our knowledge, S-Paxos~\cite{s-paxos} is the first consensus protocol to use a shared Mempool (SMP) to resolve the leader bottleneck. S-Paxos is not designed for Byzantine failures.}
Leopard~\cite{leopard} and Narwhal~\cite{narwal} utilize SMP to separate transaction dissemination from consensus and are most similar to our work.
Leopard modifies the consensus core of PBFT to allow different consensus instances to execute in parallel, since transactions may not be received in the order that proposals are proposed.
However, Leopard does not guarantee that the referenced transactions in a proposal will be available. It also does not scale well when the load across replicas is unbalanced.
Narwhal~\cite{narwal} is a DAG-based Mempool protocol.
It employs reliable broadcast (RB)~\cite{rb} to reliably disseminate transactions and uses a DAG to establish a causal relationship among blocks.
Narwhal can make progress even if the consensus protocol is stuck.
However, RB incurs quadratic message complexity and Narwhal only scales well when the nodes running the Mempool and nodes running the consensus are located on separate machines.
\fangyu{Our work differs from prior systems by contributing (1) an efficient and resilient broadcast primitive, along with (2) a co-designed load balancing mechanism to handle uneven workloads.}

\begin{table}[t]
  \caption{Existing work addressing the leader bottleneck.}
  \label{table:comparison}
\begin{center}
\footnotesize
\begin{tabular}{ | m{1.7cm} | m{1.41cm}| m{1.3cm} | m{.95cm} | m{1.35cm} |} 
  \hline
  \textbf{Protocol}& \textbf{Approach} & \textbf{Avail. guarantee} & \textbf{Load balance} & \textbf{Message complexity} \\ 
  \hline
  Tendermint~\cite{tendermint-gossip} &\hfil Gossip &\hfil \cmark &\hfil \cmark &\hfil $O(n^2)$ \\ 
  Kauri~\cite{kauri} &\hfil Tree &\hfil \cmark &\hfil \checkmark\kern-1.1ex\raisebox{.7ex}{\rotatebox[origin=c]{125}{--}} &\hfil $O(n)$ \\ 
  Leopard~\cite{leopard} &\hfil SMP &\hfil \xmark &\hfil \xmark &\hfil $O(n)$ \\ 
  Narwhal~\cite{narwal} &\hfil SMP &\hfil \cmark &\hfil \xmark &\hfil $O(n^2)$ \\ 
  \fangyu{MirBFT~\cite{mir-bft}} &\hfil \fangyu{Multi-leader} &\hfil \fangyu{\cmark} &\hfil \fangyu{\xmark} &\hfil \fangyu{$O(n^2)$} \\ 
  \hline
  \textbf{Stratus} &\hfil SMP &\hfil \cmark &\hfil \cmark &\hfil $O(n)$ \\ 
  \hline
\end{tabular}
\end{center}
\end{table}

\section{Shared Mempool Overview}\label{sec:shared-mempool}
We propose a \textit{shared mempool} (SMP) abstraction that decouples transaction dissemination from consensus to replace the original mempool in leader-based BFT protocols.
This decoupling idea enables us to use off-the-shelf consensus protocols rather than designing a scalable protocol from scratch.
% \fangyu{Table~\ref{table:acronym} lists the common acronyms used in this paper.}

\subsection{System Model}\label{sec:system-model}
We consider two roles in the BFT protocol: leader and replica.
A replica can become a \emph{leader} replica via view-changes or leader-rotation.
% Therefore, we also call it the leader replica and the other non-leader replica, respectively.
We inherit the Byzantine threat model and communication model from general BFT protocols~\cite{pbft,hotstuff}.
In particular, there are $N\geq3f+1$ replicas in the network and at most $f$ replicas are Byzantine.
The network is partially synchronous, whereby a known bound $\Delta$ on message transmission holds after some unknown Global Stabilization Time (GST)~\cite{PartialSynchrony}.

We consider external clients that issue transactions to the system.
% and that identities of clients can be verified independently by every replica.
We assume that each transaction has a unique ID and that every client knows about all the replicas (e.g., their IP addresses).
% and sends its request to only one replica. 
% This makes the received transactions of a replica disjoint from others. 
\fangyu{We also assume that each replica knows, or can learn, the leader for the current view.
% and achieves the best possible performance since there is no request duplication among replicas.
Clients can select replicas based on network delay measurements, a random hash function, or another preference.
Byzantine replicas can censor transactions, however, so a client may need to switch to another replica (using a timeout mechanism) until a correct replica is found.
We assume that messages sent in our system are cryptographically signed and authenticated.
The adversary cannot break these signatures.

We futher assume that \textit{clients send each transaction to exactly one replica, but they are free to choose the replica for each transaction.}
Byzantine clients can perform a \textit{duplicate attack} by sending identical transactions to multiple replicas.
We consider these attacks out of scope.
In future work we plan to defend against these attacks using the bucket and transaction partitioning mechanism from MirBFT~\cite{mir-bft}.}
% \fangyu{Byzantine clients may perform a \textit{duplication attack} by sending identical transactions to multiple replicas.
% Although these duplicates do not harm safety or liveness of the protocol, they waste resources.
% This issue is faced by both multi-leader protocols, such as Mir-BFT~\cite{mir-bft} and RCC~\cite{rcc}, as well as leader-less protocols, such as HoneyBdger~\cite{honeybadger} and BEAT~\cite{beat}.
% Mir-BFT is the first work to resolve this issue by using a simple bucket and transaction partitioning mechanism, which can be added to our protocol.
% Since resolving duplication attacks is not in scope of this paper, we did not include this mechanism in our design, but for the ease of description and best performance, we simply assume that \textit{clients send each transaction to exactly one replica, but they are free to choose the replica for each transaction.}}

% \begin{table}[t]
%     \centering
%     \footnotesize
%     \caption{\textbf{\fangyu{Acronyms used in this paper.}}}
%     %\setlength{\tabcolsep}{10mm}{
%     \begin{tabular}{@{}ll | ll@{}}
%         \toprule[1pt]
%         Term  & Description   & Term  & Description \\
%         \midrule
%         BFT & Byzantine Fault Tolerant & SMP & Shared mempool  \\
%         LBFT & Leader-based BFT  & RB & Reliable broadcast \\
%         GST &Global Stabilization Time  &  BA & Byzantine agreement \\
%         PAB & Provably available broadcast &  CE & Consensus events \\
%         DLB & Distributed load-balancing   & ST & Stable time\\
%         \bottomrule[1pt]
%     \end{tabular}
%     \label{table:acronym}
% \end{table}

\subsection{Abstraction}\label{sec:abstraction}

A mempool protocol is a built-in component in a consensus protocol, running at every replica.
The mempool uses the \textit{ReceiveTx(tx)} primitive to receive transactions from clients and store them in memory (or to disk, if necessary).
If a replica becomes the leader, it calls the \textit{MakeProposal()} primitive to pull transactions from the mempool and constructs a proposal for the subsequent consensus process.
In most existing cryptocurrencies and permissioned blockchains~\cite{bitcoin,diem,tendermint-gossip}, the \textit{MakeProposal()} primitive generates a full proposal that includes all the transaction data.
As such, the leader bears the responsibility for transaction distribution and consensus coordination, leading to the leader bottleneck.
\arxiv{See our analysis in Appendix~\ref{sec:leader-bottleneck-analysis}.}

To relieve the leader's burden of distributing transaction data, we propose a \textit{shared mempool} (SMP) abstraction, which has been used in the previous works~\cite{graphene,leopard,narwal}, but has not been systematically studied.
The SMP abstraction enables the transaction data to be first disseminated among replicas, and then small-sized proposals containing only transaction ids are produced by the leader for replication. 
% In many applications, a transaction payload ranges from hundreds of bytes to several KBs, while the typical size of an id is tens of bytes (e.g., 32 bytes in SHA256).
In addition, transaction data can be broadcast in batches with a unique id for each batch.
This further reduces the proposal size.
% In other words, SMP distributes the job of dissemination to the other replicas so that the leader's workload is reduced.
\arxiv{See our analysis in Appendix~\ref{sec:shared-mempool-analysis}.}
The SMP abstraction requires the following properties:

\textbf{SMP-Inclusion:} \textit{If a transaction is received and verified by a correct replica, then it is eventually included in a proposal}.

\textbf{SMP-Stability:} \textit{If a transaction is included in a proposal by a correct leader, then every correct replica eventually receives the transaction}.

The above two liveness properties ensure that a valid transaction is eventually replicated among correct replicas.
Particularly, \textbf{SMP-Inclusion} ensures that every valid transaction is eventually proposed while \textbf{SMP-Stability}, first mentioned in~\cite{s-paxos}, ensures that every proposed transaction is eventually available at all the correct replicas.
The second property makes SMP non-trivial to implement in a Byzantine environment; we elaborate on this in Section~\ref{sec:challenges}.
We should note that a BFT consensus protocol needs to ensure that all the correct replicas maintain the same history of transaction, or safety.
Using SMP does not change the order of committed transactions.
Thus, the safety of the consensus protocol is always maintained.

\subsection{Primitives and Workflow}

The implementation of the SMP abstraction modifies the two primitives \textit{ReceiveTx(tx)} and \textit{MakeProposal()} used in the traditional Mempool and adds two new primitives \textit{ShareTx(tx)} and \textit{FillProposal(p)} as follows:
\begin{itemize}[leftmargin=*]
    \item \textit{ReceiveTx(tx)} is used to receive an incoming $tx$ from a client or replica, and stores it in memory (or disk if necessary).
    \item \textit{ShareTx(tx)} is used to distribute $tx$ to other replicas.
    \item \textit{MakeProposal()} is used by the leader to pull transactions from the local mempool and construct a proposal with their ids.
    \item \textit{FillProposal(p)} is used when receiving a new proposal $p$.
    It pulls transactions from the local mempool according to the transaction ids in $p$ and fills it into a full proposal.
    It returns missing transactions if there are any.
\end{itemize}

Next, we show how these primitives work in an order-execute (OE) model, where transactions are first ordered through a consensus engine (using leader-based BFT consensus protocols) and then sent to an executor for execution.
We argue that while for simplicity our description hinges on an OE model, the principles could also be used in an execute-order-validate (EOV) model that is adopted by Hyperledger~\cite{hyperledger}.
% where transactions are first executed and endorsed by several nodes, and then ordered by consensus, followed by serial validation and commit~\cite{eov}.

We use two primitives from the consensus engine, which are \textit{Propose(p)} and \textit{Commit(p)}.
The leader replica uses \textit{Propose(p)} to broadcast a new proposal $p$ and \textit{Commit(p)} to commit $p$ when the order of $p$ is agreed on across the replicas (i.e., total ordering).
As illustrated in Figure~\ref{fig:architecture}, the transaction processing in state machine replication using SMP consists of the following steps:
\begin{itemize}[leftmargin=*]
    \item \textcircled{1} Upon receiving a new transaction $tx$ from the network, a replica calls \textit{ReceiveTx(tx)} to add $tx$ into the mempool, and \textcircled{2} disseminates $tx$ by calling \textit{ShareTx(tx)} if $tx$ is from a client (avoiding re-sharing if $tx$ is from a replica).
    \item \textcircled{3} Once the replica becomes the leader, it obtains a proposal (with transaction ids) $p$ by calling \textit{MakeProposal()}, and \textcircled{4} proposes it via \textit{Propose(p)}.
    \item \textcircled{5} Upon receipt of a proposal $p$, a non-leader replica calls \textit{FillProposal(p)} to reconstruct $p$ (pulling referenced transaction from the mempool), which is sent to the consensus engine to continue the consensus process.
    \item \textcircled{6} The consensus engine calls $Commit(p)$ to send committed proposals to the executor for execution.
\end{itemize}

\subsection{Data Structure}
% Stratus uses three data structures: \textit{microblock}, \textit{block}, and \textit{proposal}.

\textit{\textbf{Microblock.}}
Transactions are collected from clients and batched into microblocks for dissemination\footnote{We use \emph{microblocks} and \emph{transactions} interchangeably throughout the paper.
For example, the \textit{ShareTx(tx)} primitive broadcasts a microblock instead of a single transaction in practice.}.
This is to amortize the verification cost.
Recall that we assume a client only sends a request to a single replica, which makes the microblocks sent from a replica disjoint from others.
Each microblock has a unique id calculated from the transaction ids it contains.

\begin{figure}[t]
    \centering
    \includegraphics[width=0.45\textwidth]{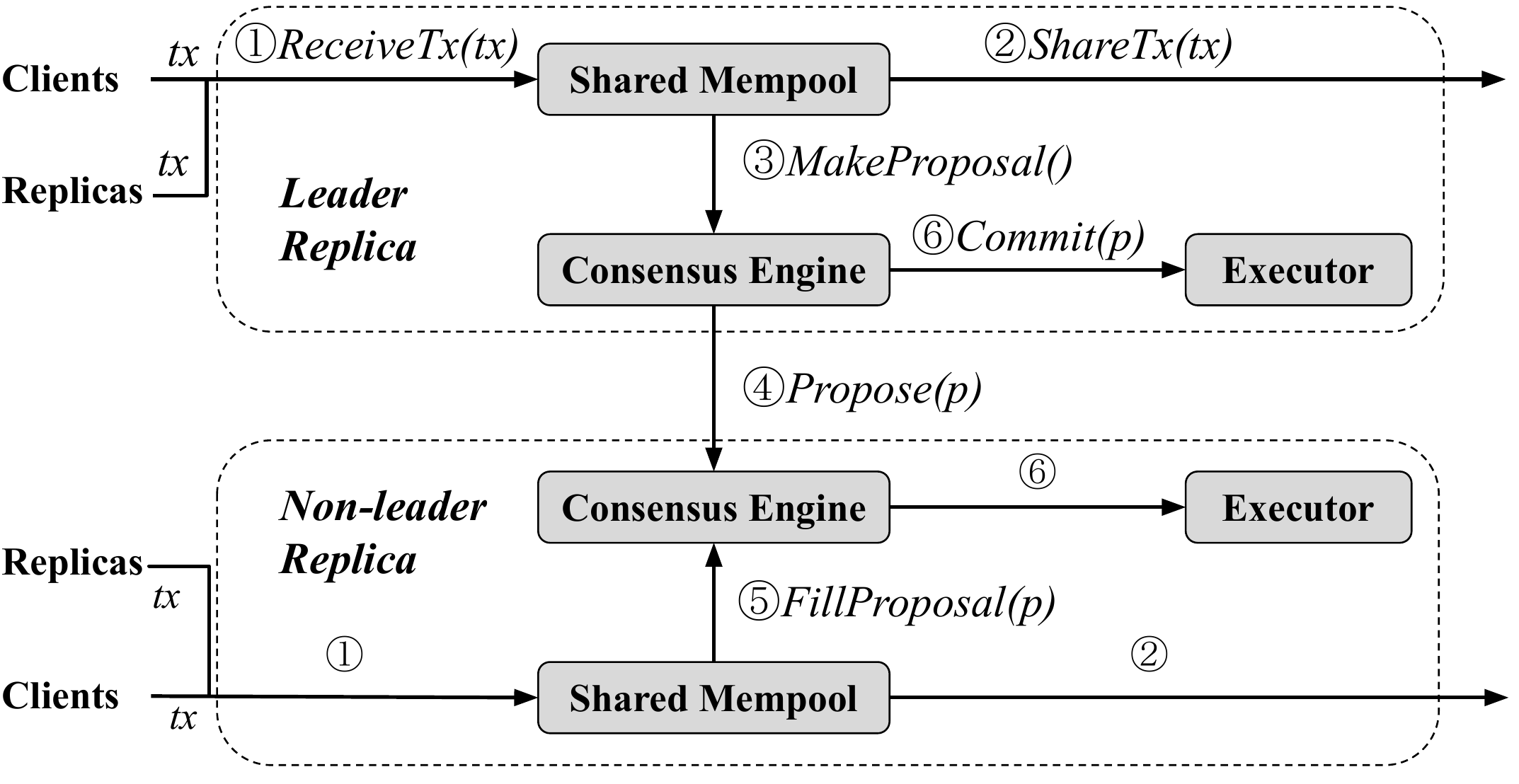}
    \caption{The processing of transactions in state machine replication using SMP.}
    \label{fig:architecture}
\end{figure}

\textit{\textbf{Proposal.}}
The \textit{MakeProposal()} primitive generates a proposal that consists of an id list of the microblocks and some metadata (e.g., the hash of the previous block, root hash of the microblocks).

\textit{\textbf{Block.}}
A block is obtained by calling the \textit{FillProposal(p)} primitive.
If all the microblocks referenced in a proposal $p$ can be found in the local mempool, we call it a \textit{full block}, or a \textit{full proposal}.
Otherwise, we call it a \textit{partial block/proposal}.
A block contains all the data included in the relevant proposal and a list of microblocks.

\subsection{Challenges and Solutions}\label{sec:challenges}

Here we discuss two challenges and corresponding solutions in implementing our SMP protocol.
%If these challenges are not handled properly, new bottlenecks could appear.

\vspace{1mm}
\noindent \textbf{Problem-I: missing transactions lead to bottlenecks.}
\fangyu{Using best-effort broadcast~\cite{cachin2011introduction} to implement \textit{ShareTx(tx)} cannot ensure \textbf{SMP-Stability} since some referenced transactions (i.e., microblocks) in a proposal might never be received due to Byzantine behavior~\cite{leopard}.
Even in a Byzantine-free case, it is possible that a proposal arrives earlier than some of the referenced transactions.
We call these transactions \textit{missing transactions}.}
Figure~\ref{fig:mempool-set} illustrates an example in which a Byzantine broadcaster ($R_5$) only shares a transaction ($tx_{1}$) with the leader ($R_1$), not the other replicas.
Therefore, when $R1$ includes $tx_{1}$ in a proposal, $tx_{1}$ will be missing at the receiving replicas.
On the one hand, missing transactions block the consensus instance because the integrity of a proposal depends on the availability of the referenced transactions, which is essential to the security of a blockchain.
This could cause frequent view-changes which significantly affect performance, as we will show in Section~\ref{sec:eval-byz}.
On the other hand, to ensure \textbf{SMP-Stability}, replicas have to proactively fetch missing transactions from the leader.
This, however, creates a new bottleneck.
% Note that network asynchrony also causes missing transactions.
\fangyu{It is also difficult for the leader to distinguish between legitimate and malicious transaction requests.}

A natural solution to address the above challenge is to use reliable broadcast (RB)~\cite{narwal} to implement \textit{ShareTx(tx)}.
% The totality property of reliable broadcast ensures the \textbf{SMP-Stability} requirement of SMP in the sense that if some transaction is delivered by a correct leader, every correct replica eventually delivers the transaction.
However, Byzantine reliable broadcast has quadratic message complexity and needs three communication rounds (round trip delay)~\cite{cachin2011introduction}, which is not suitable for large-scale systems.
We observe that some properties of reliable broadcast are \emph{not} needed by SMP since they can be provided by the consensus protocol itself (i.e., consistency and totality).
This enlightens us to seek for a lighter broadcast primitive.

\vspace{1mm}
\noindent \textbf{Solution-I: provably available broadcast.}
We resolve this problem by introducing a \textit{provably available broadcast} (PAB) primitive to ensure the availability of transactions referenced in a proposal with negligible overhead.
\fangyu{PAB provides an API to generate an \textit{availability proof} with at least $f+1$ signatures. 
Since at most $f$ signatures are Byzantine, the \textit{availability proof} guarantees that at least one correct replica (excluding the sender) has the message. This guarantees that the message can be eventually fetched from at least one correct replica.}
As such, by using PAB in Stratus, if a proposal contains valid available proofs for each referenced transaction, it can be passed directly to the commit phase without waiting for the transaction contents to arrive.
Therefore, missing transactions can be fetched using background bandwidth without blocking the consensus.

\vspace{1mm}
\noindent \textbf{Problem-II: unbalanced workload/bandwidth distribution.}
In deploying a BFT system across datacenters, it is difficult to ensure that all the nodes have identical resources.
Even if all the nodes have similar resources, it is unrealistic to assume that they will have a balanced workload in time and space.
This is because clients are unevenly distributed across regions and tend to use a preferred replica (nearest or most trusted).
In these cases, replicas with a low ratio of workload to bandwidth become bottlenecks.

To address the heterogeneity in workload/bandwidth, 
% one approach is to have replicas cooperatively disseminate transactions.
% For example, organizing replicas into a tree topology~\cite{kauri,omniledger}, with the broadcaster placed at the root, helps to offload bandwidth usage to other nodes and deal.% with uneven node resources.
% Unfortunately, such schemes increase the delay since a message takes multiple hops to reach the destination.
% In addition, structured hierarchies require complex re-configuration in the presence of faults.%, which is not straightforward in a Byzantine environment.
one popular approach is gossip~\cite{hyperledger-gossip,tendermint-srds, algorand}: the broadcaster randomly picks some of its peers and sends them the message, and the receivers repeat this process until all the nodes receive the message with high probability.
Despite their scalability, gossip protocols have a long tail-latency (the time required for the last node to receive the message) and high redundancy.

\begin{figure}[t]
    \centering
    \includegraphics[width=0.42\textwidth]{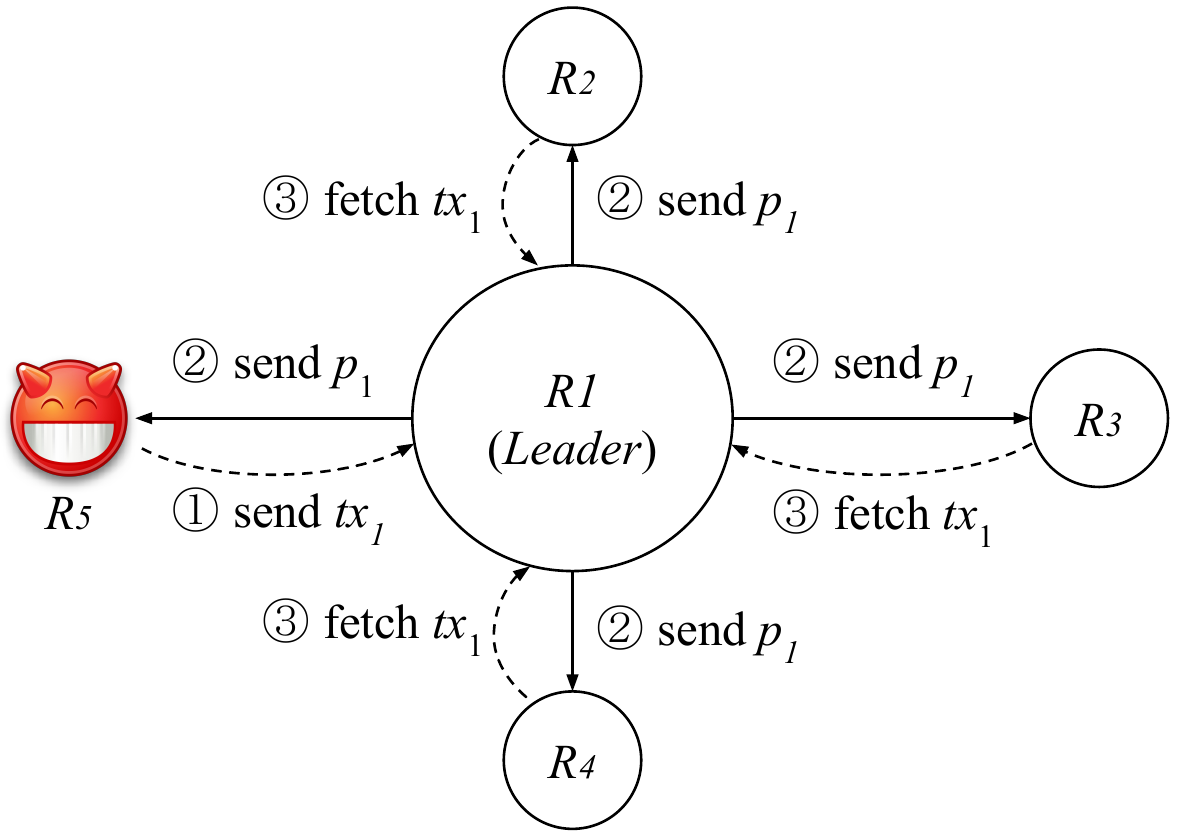}
    \caption{In a system with SMP, consisting of 5 replicas in which $R_5$ is Byzantine and $R_1$ is the current leader.}
    % \textcircled{1} $R_5$ only sends $tx_1$ to the leader ($R_1$).
    % \textcircled{2} When $R_1$ broadcasts a proposal $p_1$ that includes the id of $tx_1$, the receiving replicas' Mempool does not contain $tx_1$.
    % As such, when $tx_1$ is missing in $p_1$, \textcircled{3} replicas have to fetch it from $R_1$ to ensure \textbf{SMP-Stability}, which, leads back to the leader bottleneck.}
    \label{fig:mempool-set}
\end{figure}

\vspace{1mm}
\noindent \textbf{Solution-II: distributed load balancing.} 
We address the challenge by introducing a distributed load-balancing (DLB) protocol that is co-designed with PAB.
DLB works locally at each replica and dynamically estimates a replica's local workloads and capacities so that overloaded replicas can forward their excess load (microblocks) to under-utilized replicas (proxies).
A proxy can disseminate a certain microblock on behalf of the original sender and prove that a microblock is successfully distributed by submitting available proof to the sender.
If the proof is not submitted in time, the sender picks another under-utilized replica and repeats the process.

% In this paper, we propose a distributed load balance protocol that is co-designed with PAB to match replicas' workloads with bandwidth (see Section~\ref{sec:load-balance}).

% \begin{remark}
% In this work, we consider the scenarios of hundreds of replicas, which is succinct for most permissioned blockchains~\cite{}. 
% Our evaluation demonstrates that broadcast enjoy higher efficiency and lower latency than gossip protocols (see Section~\ref{sec:evaluation}). 
% However, when the number of replicas exceeds thousands or even millions, a replica broadcasts messages and collects acknowledgments from every other replica will be overwhelmed by the task. To address this, gossip protocols could be adopted for better scalability~\cite{}.
% \end{remark}

\section{Transaction Dissemination}~\label{sec:dissemination}

We now introduce a new broadcast primitive called \textit{provably available broadcast} (PAB) for transaction dissemination, which mitigates the impact of missing transactions (\textbf{Problem-I}).
% By using PAB in the consensus, the leader generates a succinct proof for the availability of each transaction and piggybacks these proofs in proposal messages.
% As a result, replicas do not need to wait for missing transactions if the associated proofs are valid.
\fangyu{
\textit{Every replica} in Stratus runs PAB to distribute microblocks and collect availability proofs (threshold signatures).
When a replica becomes the leader, it pulls microblock ids as well as corresponding proofs into a proposal.
This ensures that every receiving replica will have an availability proof for all the referenced microblocks in a valid proposal.
These proofs resolve Problem I (Section~\ref{sec:challenges}) by providing \textbf{PAB-Provable Availability}. This ensures that a replica will eventually receive all the referenced microblocks and it does not need to wait for missing microblocks to arrive.}

\fangyu{
Broadcasting microblocks and collecting proofs is a distributed process that is not on the critical path of consensus. As a result, they will \emph{not} increase latency. In fact, we found that PAB significantly improves throughput and latency (Figure~\ref{fig:scalability}).
}
% while existing BFT protocols
% like PBFT, PoE, and SBFT  
% rely on the leader to %distribute transactions.

% Missing transactions can be fetched using background bandwidth without blocking consensus.

\subsection{Provably Available Broadcast}\label{sec:pab-instance}
In PAB, the sending replica, or sender, $s$ broadcasts a message $m$, collects acknowledgements of receiving the message $m$ from other replicas, and produces a succinct proof $\sigma$ (realized via threshold signature~\cite{threshold}) over $m$, showing that $m$ is available to at least one correct replica, say $r$.
Eventually, other replicas that do not receive $m$ from $s$ retrieves $m$ from $r$.
%We say a replica delivers a message $m$ if it receives $m$ for the first time (either from clients or other replicas) and a $\left \langle \texttt{PAB-Ack}|m.id \right \rangle$ event is triggered.
%We use angle brackets to denote messages and events.
%For messages, we assume they are signed by their senders.
Formally, PAB satisfies the following properties:

\textbf{PAB-Integrity:} If a correct replica delivers a message $m$ from sender $s$, and $s$ is correct, then $m$ was previously broadcast by $s$.

\textbf{PAB-Validity:} If a correct sender broadcasts a message $m$, then every correct replica eventually delivers $m$.

\textbf{PAB-Provable Availability:} If a correct replica $r$ receives a valid proof $\sigma$ over $m$, then $r$ eventually delivers $m$.

We divide the algorithm into two phases, the \textit{push} phase and the \textit{recovery} phase.
The communication pattern is illustrated in Figure~\ref{fig:PAB}. \fangyu{We use angle brackets to denote messages and events and assume that messages are signed by their senders.}
%For messages, we assume they are signed by their senders.
In the \textit{push} phase, the \textit{sender} broadcasts a message $m$ and each receiver (including the sender) sends a \texttt{PAB-Ack} message $\left \langle \texttt{PAB-Ack}|m.id \right \rangle$ back to the \textit{sender}.
As long as the \textit{sender} receives at least a quorum of $q=f+1$ \texttt{PAB-Ack} messages (including the sender) from distinct receivers, it produces a succinct proof $\sigma$ (realized via threshold signature), showing that $m$ has been delivered by at least one correct replica.
The \textit{recovery} phase begins right after $\sigma$ is generated, and the \textit{sender} broadcasts the proof message $\left \langle \texttt{PAB-Proof}|id,\sigma \right \rangle$.
If some replica $r$ receives a valid \texttt{PAB-Proof} without receiving $m$, $r$ fetches $m$ from other replicas in a repeated manner.
% We should note that the quorum value $q$ is adjustable between $f+1$ and  $2f+1$, which is discussed in Section~\ref{sec:discussion}.
% We stick with $q=f+1$ in the following descriptions.

\begin{figure}[t]
    \centering
    \includegraphics[width=0.43\textwidth]{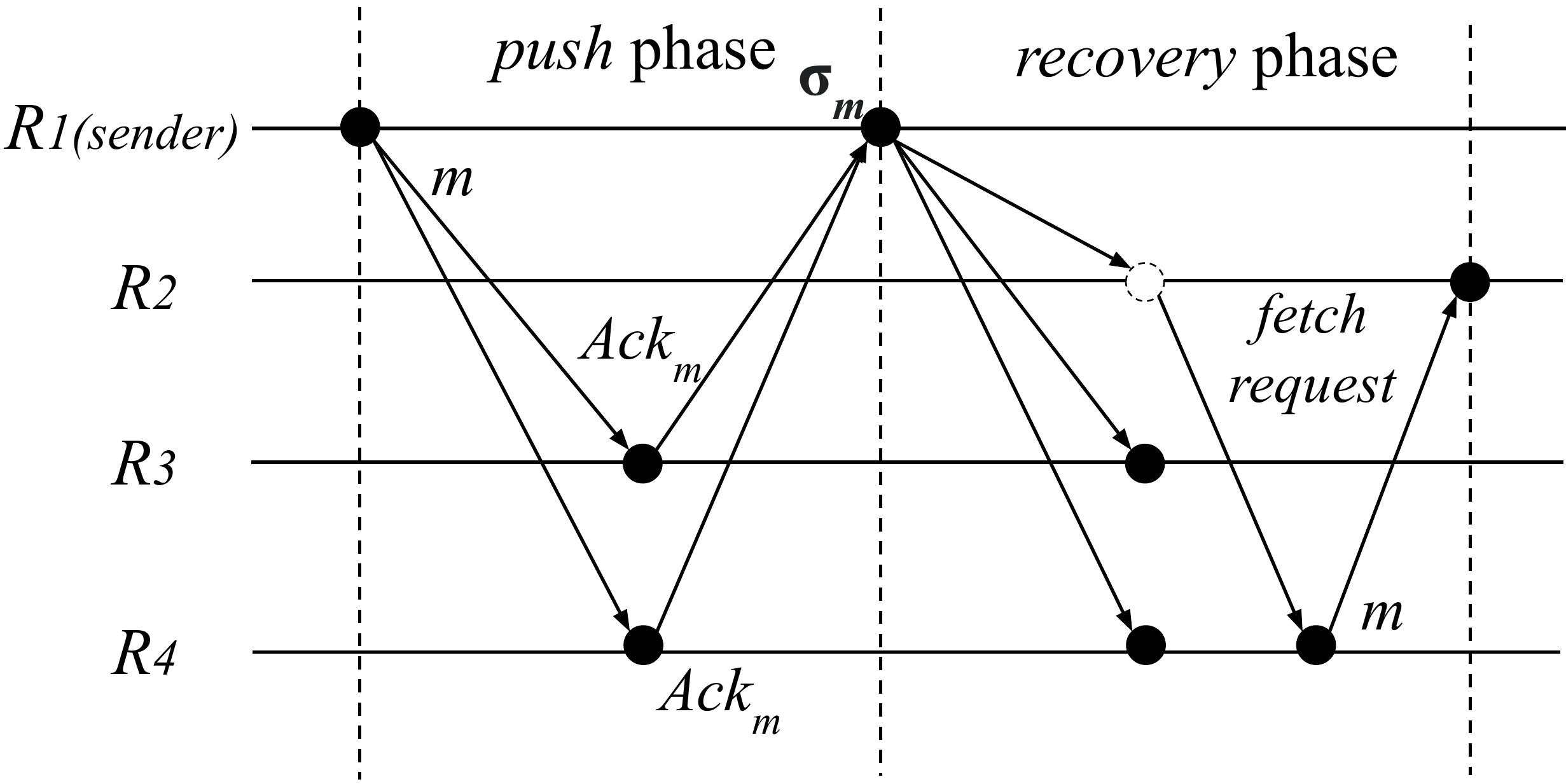}
    \caption{Message flow in PAB with $N=4$ replicas and $f=1$.
    $R_1$ is the sender (Byzantine).
    $R_2$ did not receive $m$ in the \textit{push} phase because of $R1$ or network asynchrony.
    Thus, $R_2$ fetches $m$ from $R_4$ (randomly picked) in the \textit{recovery} phase.}
    \label{fig:PAB}
\end{figure}

\begin{algorithm}[t]
  \footnotesize
  \caption{PAB with message $m$ at $R_i$ (\textit{push} phase)}\label{alg:dissemination}
  \begin{algorithmic}[1]
  \State\textbf{Local Variables}:
  \State $S \leftarrow\{\} $ \textcolor{gray}{\Comment{signature set over $m.id$}}
  \State $q \leftarrow f+1$ \textcolor{gray}{\Comment{quorum value adjustable between [$f+1$, $2f+1$]}}
  \Statex
  \State
  \textbf{upon event} $\left \langle \textsc{PAB-Broadcast}|m \right \rangle$ \textbf{do} \textsf{Broadcast}$(\left \langle \texttt{PAB-Msg}|m,R_i \right \rangle)$
  \Statex
  \State
  \textbf{upon receipt} $\left \langle \texttt{PAB-Msg}|m,s \right \rangle$ for the first time \textbf{do} 
  \textcolor{gray}{\Comment{$s\in C \cup R$}}
  \State\hspace{\algorithmicindent}
  \textsf{Store}$(m)$ \textcolor{gray}{\Comment{for future request}}
  \State\hspace{\algorithmicindent}
  \textbf{trigger} $\left \langle \textsc{PAB-Deliver}|m \right \rangle$
  \State\hspace{\algorithmicindent}
  \textbf{if} $s\in C$ \textbf{then trigger} $\left \langle \textsc{PAB-Broadcast}|m \right \rangle$\label{line:broadcast}
  \State\hspace{\algorithmicindent}
  \textbf{else} \textsf{Send}$(s,\left \langle \texttt{PAB-Ack}|m.id, R_i \right \rangle)$
  \Statex
  \State
  \textbf{upon receipt} $\left \langle \texttt{PAB-Ack}|id,s_j \right \rangle$ \textbf{do}\label{line:handle-ack}\textcolor{gray}{\Comment{if $R_i$ is the sender}}
  \State\hspace{\algorithmicindent} 
  $S\leftarrow S\cup s_j$
  \State\hspace{\algorithmicindent}
  \textbf{if} $|S|\geq q$ \textbf{then} \textcolor{gray}{\Comment{satisfies the quorum condition}}
  \State\hspace{\algorithmicindent}\hspace{\algorithmicindent}
  $\sigma\leftarrow\textsf{threshold-sign}(S)$
  \State\hspace{\algorithmicindent}\hspace{\algorithmicindent}
  \textbf{trigger} $\left \langle \textsc{PAB-Ava}|id, \sigma \right \rangle$
  \end{algorithmic}
\end{algorithm}

Algorithm~\ref{alg:dissemination} shows the \textit{push} phase, which consists of two rounds of message exchanges.
In the first round, the broadcaster disseminates $m$ via \textsf{Broadcast}() when the \textsc{PAB-Broadcast} event is triggered.
Note that a replica triggers \textsc{PAB-Broadcast} only if $m$ is received from a client to avoid re-sharing (Line~\ref{line:broadcast}). \fangyu{We use $C$ to denote the client set and $R$ to denote the replica set.} 
In the second round, every replica that receives $m$ acts as a witness by sending the sender a \texttt{PAB-Ack} message over $m.id$ (including the signature).
If the sender receives at least $q$ \texttt{PAB-Ack} messages for $m$ from distinct replicas, it generates a proof $\sigma$ from associated signatures via \textsf{threshold-sign}$()$ and triggers a \textsc{PAB-Ava} event.
The value of $q$ will be introduced shortly.

The \textit{recovery} phase serves as a backup in case the Byzantine senders only send messages to a subset of replicas or if messages are delayed due to network asynchrony.
The pseudocode of the \textit{recovery} phase is presented in Algorithm~\ref{alg:recovery}.
The sender broadcasts the proof $\sigma$ of $m$ on event \textsc{PAB-Ava}.
After verifying $\sigma$, the replica that has not received the content of $m$ invokes the $\textsf{PAB-Fetch}()$ procedure, which sends \texttt{PAB-Request} messages to a subset of replicas that are randomly picked from $signers$ of $\sigma$ (excluding replicas that have been requested).
The function $\textsf{random([0,1])}$ returns a random real number between $0$ and $1$. 
The configurable parameter $\alpha$ denotes the probability that a replica is requested. 
If the message is not fetched in $\delta$ time, the $\textsf{PAB-Fetch}()$ procedure will be invoked again and the timer will be reset. 
% \fangyu{A replica can verify a requested message by checking its hash value.}

Although we use $q=f+1$ as the stability parameter in the previous description of PAB, the threshold is adjustable between $f+1$ and $2f+1$ without hurting PAB's properties.
The upper bound is $2f+1$ because there are $N\geq3f+1$ replicas in total, where up to $f$ of them are Byzantine.
In fact, $q$ captures a trade-off between the efficiency of the \textit{push} and \textit{recovery} phases.
A larger $q$ value improves the \textit{recovery} phase since it increases the chance of fetching the message from a correct replica. But, a larger $q$ increases latency, since it requires that the replica waits for more acks in the \textit{push} phase.

\begin{algorithm}[t]
  \footnotesize
  \caption{PAB with message $m$ at $R_i$ (\textit{recovery} phase)}\label{alg:recovery}
  \begin{algorithmic}[1]
  \State\textbf{Local Variables}:
  \State $signers \leftarrow\{\} $ \textcolor{gray}{\Comment{signers of $m$}}
  \State $requested \leftarrow\{\} $ \textcolor{gray}{\Comment{replicas that have been requested}}
  \Statex
  \State
  \textbf{upon event} $\left \langle \textsc{PAB-Ava}|id,\sigma \right \rangle$ \textbf{do}\textcolor{gray}{\Comment{if $R_i$ is the sender}}\label{line:fetch}
  \State\hspace{\algorithmicindent}
  \textsf{Broadcast}$(\left \langle \texttt{PAB-Proof}|id,\sigma \right \rangle)$
  \Statex
  \State
  \textbf{upon receipt} $\left \langle \texttt{PAB-Proof}|id,\sigma \right \rangle$ \textbf{do}
  \State\hspace{\algorithmicindent}
  \textbf{if} $\textsf{threshold-verify}(id,\sigma)$ is not \textbf{true} \textbf{do} \textbf{return}
  \State\hspace{\algorithmicindent}
  $signers\leftarrow\sigma.signers$
  \State\hspace{\algorithmicindent}
  \textbf{if} $m$ does not exist by checking $id$ \textbf{do} \textsf{PAB-Fetch}$(id)$
  \Statex
  \State
  \textbf{procedure} \textsf{PAB-Fetch}($id$)
  \State\hspace{\algorithmicindent}
  $\textsf{starttimer}(\texttt{Fetch},\delta,id)$\label{line:starttimer}
  \State\hspace{\algorithmicindent}
  \textbf{forall} $r\in signers\setminus requested$ \textbf{do}
  \State\hspace{\algorithmicindent}\hspace{\algorithmicindent}
  \textbf{if} $\textsf{random}([0,1])>\alpha$ \textbf{then}
  \State\hspace{\algorithmicindent}\hspace{\algorithmicindent}\hspace{\algorithmicindent}
  $requested\leftarrow requested \cup r$
  \State\hspace{\algorithmicindent}\hspace{\algorithmicindent}\hspace{\algorithmicindent}
  $\textsf{Send}(r, \left \langle \texttt{PAB-Request}|id,R_i \right \rangle)$
  \State\hspace{\algorithmicindent}
  \textbf{wait until} all requested messages are delivered, or $\delta$ timeout \textbf{do}
  \State\hspace{\algorithmicindent}\hspace{\algorithmicindent}
  \textbf{if} $\delta$ timeout \textbf{do} \texttt{PAB-Fetch}$(id)$
  \end{algorithmic}
\end{algorithm}

\subsection{Using PAB in Stratus}\label{sec:stable-consensus}

Now we discuss how we use PAB in our Stratus Mempool and how it is integrated with a leader-based BFT protocol.
Recall Figure~\ref{fig:architecture} that shows the interactions between the shared mempool and the consensus engine in the \textit{Propose} phase.
Specifically, (i) the leader makes a proposal by calling \textsf{MakeProposal}(), and (ii) upon a replica receiving a new proposal $p$, it fills $p$ by calling \textsf{FillProposal}$(p)$.
Here we present the implementations of the \textsf{MakeProposal()} and \textsf{FillProposal}$(p)$ procedures as well as the logic for handling an incoming proposal in Algorithm~\ref{alg:proposal}.
The consensus events and messages are denoted with \textsc{CE}.

\begin{algorithm}[t!]
  \footnotesize
  \caption{\textit{Propose} phase of view $v$ at replica $R_i$}\label{alg:proposal}
  \begin{algorithmic}[1]
  \State\textbf{Local Variables}:
  \State $mbMap \leftarrow\{\} $ \textcolor{gray}{\Comment{maps microblock id to microblock}}
  \State $pMap \leftarrow\{\} $ \textcolor{gray}{\Comment{maps microblock id to available proof}}
  \State $avaQue\leftarrow\{\}$ \textcolor{gray}{\Comment{stores microblock id that is provably available}}
  \Statex
  \State
  \textbf{upon receipt} $\left \langle \texttt{PAB-Proof}|id,\sigma \right \rangle$ \textbf{do}
  \State\hspace{\algorithmicindent}
  \textbf{if} $\textsf{threshold-verify}(id,\sigma)$ is not \textbf{true} \textbf{do} \textbf{return}
  \State\hspace{\algorithmicindent}
  $pMap[id]\leftarrow \sigma$\label{line:store-proof}
  \State\hspace{\algorithmicindent}
  $avaQue.\textsf{Push}(id)$\label{line:store-ava}
  \Statex
  \State
  \textbf{upon event} $\left \langle \textsc{PAB-Deliver}|mb \right \rangle$ \textbf{do} $mbMap[mb.id]\leftarrow mb$\label{line:deliver}
  \Statex
  \State
  \textbf{upon event} $\left \langle \textsc{CE-NewView}|v \right \rangle$ \textbf{do}
  \State\hspace{\algorithmicindent}
  \textbf{if} $R_i$ is the leader for view \textit{v} \textbf{then}
  \State\hspace{\algorithmicindent}\hspace{\algorithmicindent}
  $p\leftarrow\textsf{MakeProposal}(v)$
  \State\hspace{\algorithmicindent}\hspace{\algorithmicindent}
  $\textsf{Broadcast}(\left \langle \texttt{CE-Propose}|p,R_i\right \rangle)$
  \Statex
  \State
  \textbf{procedure} \textsf{MakeProposal}$(v)$
  \State\hspace{\algorithmicindent}
  $payload\leftarrow\{\}$
  \State\hspace{\algorithmicindent}
  \textbf{while(\textbf{True})}
  \State\hspace{\algorithmicindent}\hspace{\algorithmicindent}
  $id\leftarrow avaQue.\textsf{Pop}()$
  \State\hspace{\algorithmicindent}\hspace{\algorithmicindent}
  $payload[id]\leftarrow pMap[id]$
  \State\hspace{\algorithmicindent}\hspace{\algorithmicindent}
  \textbf{if} $\textsf{Len}(payload)\geq\textsc{BlockSize}$ \textbf{or} $id=\perp$ \textbf{then}
  \State\hspace{\algorithmicindent}\hspace{\algorithmicindent}\hspace{\algorithmicindent}
  \textbf{break}\label{line:pop-end}
  \State\hspace{\algorithmicindent}
  \textbf{return} \textsf{newProposal($v,payload$)}
  \Statex
  \State
  \textbf{upon receipt} $\left \langle \texttt{CE-Propose}|p,r \right \rangle$ \textcolor{gray}{\Comment{$r$ is the current leader}}
  \State\hspace{\algorithmicindent}
  \textbf{for} $id, \sigma\in p.payload$ \textbf{do}
  \Statex\hspace{\algorithmicindent}\hspace{\algorithmicindent}
  \textbf{if} $\textsf{threshold-verify}(id,\sigma)$ is not \textbf{true} \textbf{do}
  \State\hspace{\algorithmicindent}\hspace{\algorithmicindent}\hspace{\algorithmicindent}
  \textbf{trigger} $\left \langle \textsc{CE-ViewChange}|R_j \right \rangle$
  \State\hspace{\algorithmicindent}\hspace{\algorithmicindent}\hspace{\algorithmicindent}
  \textbf{return}
  \State\hspace{\algorithmicindent}
  \textbf{trigger} $\left \langle \textsc{CE-EnterCommit}|p \right \rangle$\label{line:enter-commit}
  \State\hspace{\algorithmicindent}
  \textsf{FillProposal}$(p)$
  \Statex
  \State
  \textbf{procedure} \textsf{FillProposal}$(p)$
  \State\hspace{\algorithmicindent}
  $block\leftarrow \{p\}$
  \State\hspace{\algorithmicindent}
  \textbf{forall} $id\in p.payload$ \textbf{do}
  \State\hspace{\algorithmicindent}\hspace{\algorithmicindent}
  \textbf{if} $mb$ associated with $id$ has not been delivered \textbf{then}
  \State\hspace{\algorithmicindent}\hspace{\algorithmicindent}\hspace{\algorithmicindent}
  $\textsf{PAB-Fetch}(id)$
  \State\hspace{\algorithmicindent}
  \textbf{wait until} every requested $mb$ is delivered \textbf{then}
  \State\hspace{\algorithmicindent}\hspace{\algorithmicindent}
  \textbf{forall} $id\in p.payload$ \textbf{do}
  \State\hspace{\algorithmicindent}\hspace{\algorithmicindent}\hspace{\algorithmicindent}
  $block.\textsf{Append}(mbMap[id])$
  \State\hspace{\algorithmicindent}\hspace{\algorithmicindent}\hspace{\algorithmicindent}
  $avaQue.\textsf{Remove}(id)$\label{line:remove}
  \State\hspace{\algorithmicindent}\hspace{\algorithmicindent}\hspace{\algorithmicindent}
  \textbf{trigger} $\left \langle \textsc{CE-Full}|block \right \rangle$
  \end{algorithmic}
\end{algorithm}

Since transactions are batched into microblocks for dissemination,
% (by triggering $\left \langle \textsc{PAB-Broadcast}|mb \right \rangle$),
we use microblocks (i.e., $mb$) instead of transactions in our description.
The consensus protocol subscribes \textsc{PAB-Deliver} events and \texttt{PAB-Proof} messages from the underlying PAB protocol and modifies the handlers, in which we use \textit{mbMap}, \textit{pMap}, and \textit{avaQue} for bookkeeping.
Specifically, \textit{mbMap} stores microblocks upon the \textsc{PAB-Deliver} event (Line~\ref{line:deliver}).
Upon the receipt of \texttt{PAB-Proof} messages, the microblock \textit{id} is pushed into the queue \textit{avaQue} (Line~\ref{line:store-ava}) and the relevant proof $\sigma$ is recorded in \textit{pMap} (Line~\ref{line:store-proof}).

We assume the consensus protocol proceeds in views, and each view has a designated leader.
A new view is initiated by a \textsc{CE-NewView} event.
Once a replica becomes the leader for the current view, it attempts to invoke the \textsf{MakeProposal()} procedure, which pulls microblocks (only ids) from the front of $avaQue$ and piggybacks associated proofs.
It stops pulling when the number of contained microblocks has reached \textsc{BlockSize}, or there are no microblocks left in $avaQue$.
The reason why the proposal needs to include all the associated available proofs of each referenced transaction is to show that the availability of each referenced microblock is guaranteed.
We argue that the inevitable overhead is negligible provided that the microblock is large.

On the receipt of an incoming proposal $p$, the replica verifies every proof included in $p.payload$ and triggers a \textsc{CE-ViewChange} event if the verification is not passed, attempting to replace the current leader.
If the verification is passed, a $\left \langle \textsc{CE-EnterCommit}|p \right \rangle$ event is triggered and the processing of $p$ enters the commit phase (Line~\ref{line:enter-commit}).
Next, the replica invokes the \textsf{FillProposal}$(p)$ procedure to pull the content of microblocks associated with $p.payload$ from the mempool.
The \textsf{PAB-Fetch}$(id)$ procedure (Algorithm~\ref{alg:recovery}) is invoked when missing microblocks are found.
The thread waits until all the requested microblocks are delivered.
Note that this thread is independent of the thread handling consensus events.
Therefore, waiting for requested microblocks will not block consensus.
After a \textit{full block} is constructed, the replica triggers a $\left \langle \textsc{CE-Full}|block \right \rangle$ event, indicating that the block is ready for execution.

%As it is possible that a microblock id is committed before the related content is received, the contained transactions cannot be immediately executed.
%Instead, these transactions (ids) are recorded and executed once the corresponding content is available.
\fangyu{In Stratus, the transactions in a microblock are executed if and only if all transactions in the previous microblocks are received and executed.}
Since missing transactions are fetched according to their unique ids, consistency is ensured.
\fangyu{
Therefore, using Stratus in any case will not compromise the safety of the consensus protocol.
}
\textbf{The advantage of using PAB is that it allows the consensus protocol to safely enter the commit phase of a proposal without waiting for the missing microblocks to be received.}
In addition, the \textit{recovery} phase proceeds concurrently with the consensus protocol (only background bandwidth is used) until the associated block is full for execution.
Many optimizations~\cite{ace,execution,basil} for improving the execution have been proposed and we hope to build on them in our future work.
Our implementation satisfies \textbf{PAB-Provable Availability}, which helps Stratus achieve \textbf{SMP-Inclusion} and \textbf{SMP-Stability}.

\subsection{Correctness Analysis}

Now we prove the correctness of PAB.
Since the integrity and validity properties are simple to prove, here we only show that Algorithm~\ref{alg:dissemination} and Algorithm~\ref{alg:recovery} satisfy \textbf{PAB-Provable Avalability}.
Then we provide proofs that Stratus satisfies \textbf{SMP-Inclusion} and \textbf{SMP-Stability}.

\begin{lemma}
[\textbf{PAB-Provable Availability}] If a proof $\sigma$ over a message $m$ is valid, then at least one correct replica holds $m$.
In the \textit{recovery} phase (Algorithm~\ref{alg:recovery}), the receiving replica $r$ repeatedly invokes \textsf{PAB-Fetch}$(id)$ and sends requests to randomly picked replicas.
Eventually, a correct replica will respond and $r$ will deliver $m$.
\end{lemma}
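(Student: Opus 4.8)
The plan is to establish the two assertions of the statement separately: first, that validity of $\sigma$ over $m$ forces some correct replica to hold $m$; second, that the randomized \textit{recovery} loop (Algorithm~\ref{alg:recovery}) run by a replica possessing $\sigma$ eventually delivers $m$.

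For the first assertion I would unpack how $\sigma$ is manufactured. By Algorithm~\ref{alg:dissemination}, the sender triggers $\langle\textsc{PAB-Ava}|id,\sigma\rangle$ only after collecting a set $S$ with $|S|\ge q$ and $q\ge f+1$, where each element of $S$ is a signature over $m.id$ carried in a \texttt{PAB-Ack} message from a \emph{distinct} replica, and $\sigma=\textsf{threshold-sign}(S)$. A correct replica sends such a \texttt{PAB-Ack} only from inside the \texttt{PAB-Msg} handler, i.e.\ only after executing \textsf{Store}$(m)$, and never deletes $m$ afterwards; since threshold signatures are unforgeable, ``$\sigma$ is valid'' genuinely certifies at least $f+1$ distinct signers, of which at most $f$ are Byzantine, so at least one is correct and holds $m$.

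For the second assertion, let $r$ be a correct replica that obtains a valid $\sigma$ over $m$ (from a \texttt{PAB-Proof} message, or from a proposal payload) but does not yet hold $m$; then $r$ sets \textit{signers} to $\sigma.signers$ and enters \textsf{PAB-Fetch}$(id)$. Fix a correct replica $c$ in \textit{signers} holding $m$, which exists by the previous paragraph. Each call to \textsf{PAB-Fetch} adds every still-unrequested signer to \textit{requested} with a fixed positive probability (the $\textsf{random}([0,1])>\alpha$ test, $\alpha<1$), sends those a \texttt{PAB-Request}, and arms a $\delta$-timer; on timeout \textsf{PAB-Fetch} is re-invoked on the signers still outside \textit{requested}. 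Since \textit{requested} grows monotonically and \textit{signers} is finite, $c$ enters \textit{requested} after finitely many rounds almost surely, hence is requested at some round occurring after GST; then the request and its reply each take at most $\Delta$, so with $\delta\ge 2\Delta$ the reply arrives inside that round's window, $r$ executes \textsf{Store}$(m)$ and triggers $\langle\textsc{PAB-Deliver}|m\rangle$. This gives \textbf{PAB-Provable Availability}; \textbf{PAB-Integrity} and \textbf{PAB-Validity} follow routinely from message authentication and the push phase and I would omit them.

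The step I expect to be the main obstacle is the liveness of this recovery loop against an adversary that lets the sender deliver $m$ to only a minority of replicas and keeps replies slow before GST. The argument hinges on two points I would make explicit: (a) \textit{requested} is monotone and bounded by $|signers|$, so only finitely many rounds can query a proper subset, and combined with the bounded-below per-round selection probability this forces the correct holder $c$ into \textit{requested} with probability one; and (b) the recovery thread is independent of consensus (Section~\ref{sec:stable-consensus}), so retries persist across view changes and the eventual post-GST round trip to $c$ completes uninterrupted. A minor subtlety worth stating is the sizing of $\delta$: a fixed timeout chosen before GST still suffices for \emph{eventual} delivery, but assuming $\delta\ge 2\Delta$ lets the decisive post-GST fetch finish within a single window and keeps the argument clean.
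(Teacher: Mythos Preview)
Your proposal is correct and follows the same line of reasoning as the paper. In fact the paper offers no proof beyond the two sentences already contained in the lemma statement itself (the quorum-counting argument and the ``eventually a correct replica responds'' claim); your write-up is a strictly more detailed expansion of that sketch, making explicit the monotonicity of \textit{requested}, the post-GST timing, and the independence of the recovery thread from consensus, none of which the paper spells out.
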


\begin{theorem}
Stratus ensures \textbf{SMP-Inclusion}.
\end{theorem}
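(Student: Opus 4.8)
The plan is to unwind the definition of \textbf{SMP-Inclusion}: if a transaction $tx$ is received and verified by a correct replica, then $tx$ is eventually included in a proposal. First I would reduce from transactions to microblocks: since a correct replica batches each verified $tx$ into some microblock $mb$ (via \textit{ReceiveTx}/\textit{ShareTx}), it suffices to show that $mb$'s id eventually lands in a proposal's payload. Then the argument proceeds in two stages — first, $mb.id$ enters some correct replica's $avaQue$; second, once in $avaQue$, it is eventually popped into a proposal by some correct leader.

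For the first stage, I would trace the PAB execution starting at step \textcircled{2} of the workflow. A correct replica that verifies $tx$ triggers \textsc{PAB-Broadcast} for the enclosing microblock $mb$ (Algorithm~\ref{alg:dissemination}, Line~\ref{line:broadcast}). By \textbf{PAB-Validity}, every correct replica eventually delivers $mb$ and — being correct — sends back a \texttt{PAB-Ack}. Since there are $N \geq 3f+1$ replicas and at most $f$ are Byzantine, at least $2f+1 \geq f+1 = q$ correct replicas acknowledge, so the sender collects $q$ acks, runs \textsf{threshold-sign}, and triggers \textsc{PAB-Ava}, broadcasting \texttt{PAB-Proof} in the recovery phase. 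Every correct replica receiving that proof verifies it and pushes $mb.id$ onto $avaQue$ with the proof stored in $pMap$ (Algorithm~\ref{alg:proposal}, Lines~\ref{line:store-proof}--\ref{line:store-ava}). So in particular at least one correct replica has $mb.id$ in its $avaQue$. (If the broadcaster is itself Byzantine and withholds the proof, I note that $mb.id$ may not reach $avaQue$ this way; but then the microblock was never "properly shared," and SMP-Inclusion is conditioned on a \emph{correct} replica having received and verified the transaction and thus running PAB honestly — this is the case I restrict to.)

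For the second stage — that an item sitting in a correct replica's $avaQue$ is eventually proposed — I would argue by leader rotation and liveness of the underlying consensus. After GST, the partially synchronous consensus protocol rotates through views, so infinitely often a correct replica becomes leader and invokes \textsf{MakeProposal} (upon \textsc{CE-NewView}). \textsf{MakeProposal} pops from the front of $avaQue$ until reaching \textsc{BlockSize}; an item is only removed from $avaQue$ without being proposed via \textsf{FillProposal}'s \textsf{Remove} (Line~\ref{line:remove}), which happens precisely when that microblock \emph{was} included in a committed proposal. Hence every id in $avaQue$ either gets proposed directly or was already proposed. A FIFO-queue / fairness argument then finishes it: since only finitely many ids sit ahead of $mb.id$, and each correct-leader proposal drains \textsc{BlockSize} of them (or they drain via commit), $mb.id$ reaches the front and is proposed within finitely many correct-leader views.

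The main obstacle is the second stage: making the "eventually popped from $avaQue$" claim rigorous requires care about (i) whether Byzantine leaders can starve the queue by never proposing or by proposing other items — handled by appealing to consensus liveness guaranteeing infinitely many correct leaders after GST, and by noting Byzantine leaders cannot \emph{remove} items from a correct replica's $avaQue$; and (ii) whether $avaQue$ could grow unboundedly faster than it drains, so $mb.id$ never reaches the front — this needs either a fairness assumption on ordering within the queue (it is FIFO, so fresh arrivals go to the back) or an explicit argument that the prefix ahead of $mb.id$ is fixed and finite at the moment $mb.id$ is pushed. I would make the FIFO observation explicit and state the mild assumption that correct leaders are selected infinitely often (standard for the partially synchronous BFT protocols Stratus is layered on), which closes the argument.
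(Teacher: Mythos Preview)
Your proposal is correct and follows essentially the same approach as the paper: reduce to microblocks, use \textbf{PAB-Validity} to argue every correct replica delivers $mb$ and acks it so the correct sender forms and broadcasts the proof, conclude that every correct replica pushes $mb.id$ into $avaQue$, and then appeal to correct-leader rotation to get it proposed. Your second stage is in fact more carefully argued than the paper's, which simply asserts ``$mb$ will be eventually popped from $avaQue$ of a correct leader $l$ and proposed by $l$'' without the FIFO/fairness and leader-liveness reasoning you supply.
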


\begin{proof}
If a transaction $tx$ is delivered and verified by a correct replica $r$ (the sender), it will be eventually batched into a microblock $mb$ and disseminated by PAB.
Due to the \textit{validity} property of PAB, $mb$ will be eventually delivered by every correct replica, which sends acks over $mb$ back to the sender.
An available proof $\sigma$ over $mb$ will be generated and broadcast by the sender.
Upon the receipt of $\sigma$, every correct replica pushes $mb$ ($mb.id$) into $avaQue$.
Therefore, $mb$ ($tx$) will be eventually popped from $avaQue$ of a correct leader $l$ and proposed by $l$.
\end{proof}

\begin{theorem}
Stratus ensures \textbf{SMP-Stability}. 
\end{theorem}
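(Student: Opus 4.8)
The plan is to show that any transaction included in a proposal by a correct leader eventually reaches every correct replica, by tracing the proposal through the commit phase and then invoking \textbf{PAB-Provable Availability}. First I would observe that if a correct leader $l$ includes a microblock $mb$ (containing $tx$) in a proposal $p$, then by the construction in \textsf{MakeProposal}, $p.payload$ contains the microblock id together with a valid availability proof $\sigma$ over $mb$ (the leader only pops ids from $avaQue$, and an id is pushed into $avaQue$ only after a valid \texttt{PAB-Proof} has been received and recorded in $pMap$). Since the leader is correct, this proof verifies, so no correct replica triggers \textsc{CE-ViewChange} on account of $p$; instead each correct replica that receives $p$ passes the proof-verification check, stores $\sigma$, and proceeds.

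Next I would argue that every correct replica does in fact receive $p$ (or learns of its payload). Here I would appeal to the properties of the underlying leader-based BFT consensus protocol: a proposal broadcast by a correct leader after GST is delivered to, and committed by, all correct replicas (this is the standard liveness/agreement guarantee of the consensus engine, which Stratus does not modify). Once a correct replica holds $p.payload$, it has a valid proof $\sigma$ over $mb$. By \textbf{PAB-Provable Availability} (the Lemma above), the replica will eventually deliver $mb$: it invokes \textsf{FillProposal}$(p)$, which calls \textsf{PAB-Fetch}$(id)$ for every missing microblock, and by the Lemma the repeated randomized fetch eventually hits a correct replica holding $mb$, so $mb$ — and hence $tx$ — is delivered. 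This establishes \textbf{SMP-Stability}.

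The main obstacle I anticipate is making precise the claim that ``every correct replica eventually obtains $p.payload$.'' In a partially synchronous leader-based protocol, a correct leader's proposal is guaranteed to be committed by all correct replicas only for views that begin after GST and last long enough; a correct leader in an early (pre-GST) view might have its proposal dropped. The clean way to handle this is to note that \textbf{SMP-Stability} is only required for transactions that are actually included in a proposal by a correct leader that is committed — or, more carefully, to combine this with \textbf{SMP-Inclusion}: if a particular proposal is abandoned, the transaction remains in $avaQue$ (it is removed only in \textsf{FillProposal} after a full block is built, Line~\ref{line:remove}), so it will be re-proposed by a later correct leader, and eventually by one whose view succeeds. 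A secondary subtlety is ensuring the availability proof $\sigma$ travels with the proposal to all correct replicas even when the microblock content does not; this is exactly what the payload-inclusion of proofs in \textsf{MakeProposal} guarantees, so I would state that explicitly. With these points addressed, the argument reduces to chaining the consensus engine's agreement guarantee with the PAB Lemma.

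\begin{proof}
Suppose a transaction $tx$, batched in microblock $mb$, is included in a proposal $p$ by a correct leader $l$. By \textsf{MakeProposal}, $l$ only adds an id to $p.payload$ after popping it from $avaQue$, and an id enters $avaQue$ only together with a valid availability proof $\sigma$ recorded in $pMap$; hence $p.payload$ carries a valid $\sigma$ over $mb$. Because $l$ is correct and $\sigma$ verifies, every correct replica receiving $p$ passes the \textsf{threshold-verify} check and does not view-change on $p$; it records $\sigma$ and triggers \textsc{CE-EnterCommit}.

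By the agreement and liveness guarantees of the (unmodified) leader-based BFT consensus engine, after GST a correct leader's proposal is committed by every correct replica; so some proposal carrying $mb$'s id and proof is delivered to all correct replicas. (If an earlier proposal containing $mb$ is abandoned, $mb.id$ is not removed from $avaQue$ — removal happens only in \textsf{FillProposal} after a full block is assembled — so by \textbf{SMP-Inclusion} $mb$ is re-proposed, eventually by a correct leader whose view succeeds.) Each such correct replica thus holds a valid proof $\sigma$ over $mb$ and invokes \textsf{FillProposal}$(p)$, which calls \textsf{PAB-Fetch}$(id)$ for $mb$ if it is missing. By \textbf{PAB-Provable Availability}, the repeated randomized fetching eventually reaches a correct replica holding $mb$, so $mb$ — and therefore $tx$ — is delivered at every correct replica. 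Hence Stratus ensures \textbf{SMP-Stability}.
\end{proof}
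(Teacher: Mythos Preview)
Your proposal is correct and follows essentially the same approach as the paper: inclusion by a correct leader implies a valid availability proof is attached, and then \textbf{PAB-Provable Availability} yields eventual delivery at every correct replica. The paper's own proof is a two-sentence version of this; you additionally spell out \emph{how} the proof $\sigma$ reaches every correct replica (via the proposal broadcast and the consensus engine's guarantees) and handle the pre-GST/abandoned-proposal subtlety, both of which the paper leaves implicit.
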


\begin{proof}
If a transaction $tx$ is included in a proposal by a correct leader, it means that $tx$ is provably available (a valid proof $\sigma$ over $tx$ is valid).
Due to the \textbf{PAB-Provable Availability} property of PAB, every correct replica eventually delivers $tx$.
\end{proof}

\section{Load Balancing}\label{sec:load-balance}

We now discuss Stratus' load balancing.
Recall that replicas disseminate transactions in a distributed manner.
But, due to network heterogeneity and workload imbalance (\textbf{Problem-II}), performance will be bottlenecked by overloaded replicas.
Furthermore, a replica's workload and its resources may vary over time.
Therefore, a load balancing protocol that can adapt to a replica's workload and capacity is necessary.

In our design, busy replicas will forward excess load to less busy replicas that we term \textit{proxies}.
The challenges are (i) how to determine whether a replica is busy, (ii) how to decide which replica should receive excess loads, and (iii) how to deal with Byzantine proxies that refuse to disseminate the received load.

Our load balancing protocol works as follows.
A local workload estimator monitors the replica to determine if it is \textit{busy} or \textit{unbusy}.
We discuss work estimation in Section~\ref{subsec:estimation}.
Next, a \textit{busy} replica forwards newly generated microblocks to a proxy.
The proxy initiates a PAB instance with a forwarded microblock and is responsible for the \textit{push} phase.
When the \textit{push} phase completes, the proxy sends the \texttt{PAB-Proof} message of the microblock to the original replica, which continues the \textit{recovery} phase.
In addition, we adopt a $banList$ to avoid Byzantine proxies.
Next, we discuss how a busy replica forwards excess load.

\subsection{Load Forwarding}\label{sec:load-forwarding}

Before forwarding excess load, a busy replica needs to know which replicas are unbusy.
A na\"ive approach is to ask other replicas for their load status.
However, this requires all-to-all communications and is not scalable.
Instead, we use the well-known Power-of-d-choices (Pod) algorithm~\cite{vvedenskaya1996queueing, mitzenmacher2001power,batch}.
A \textit{busy} replica randomly samples load status from $d$ replicas, and forwards its excess load to the least loaded replica (the proxy).
Here, $d$ is usually much smaller than the number of replicas $N$.
Our evaluation shows that $d = 3$ is sufficient for a network with hundreds of nodes and unbalanced workloads (see Section~\ref{sec:eval-unbalance}).
Note that the choice of $d$ is independent of $f$; we discuss how we handle Byzantine proxies later in this section.
The randomness in Pod ensures that the same proxy is unlikely to be re-sampled and overloaded.

\begin{algorithm}[t!]
  \footnotesize
  \caption{The Load Forwarding procedure at replica $R_i$}\label{alg:balance}
  \begin{algorithmic}[1]
  \State\textbf{Local Variables}:
  \State $samples\leftarrow\{\}\{\}$ \textcolor{gray}{\Comment{stores sampled info for a microblock}}
  \State $banList\leftarrow\{\}$ \textcolor{gray}{\Comment{stores potentially Byzantine proxies}}
  \Statex
  \State
  \textbf{upon event} $\left \langle \textsc{NewMB}|mb \right \rangle$ \textbf{do} 
  \State\hspace{\algorithmicindent}
  \textbf{if} \textsf{IsBusy}() \textbf{do} \textsf{LB-ForwardLoad}$(mb)$
  \State\hspace{\algorithmicindent}
  \textbf{else trigger} $\left \langle \textsc{PAB-Broadcast}|mb \right \rangle$
  \Statex
  \State \textbf{procedure} $\textsf{LB-ForwardLoad}(mb)$ \textcolor{gray}{\Comment{if $R_i$ is the \textit{busy} sender}}
  \State\hspace{\algorithmicindent}
  $\textsf{starttimer}(\texttt{Sample},\tau,mb.id)$  
  \State\hspace{\algorithmicindent}
  $K \leftarrow \textsf{SampleTargets}(d)\setminus banList$
  \State\hspace{\algorithmicindent}
  \textbf{forall} $r\in K$ \textbf{do} $\textsf{Send}(r, \left \langle \label{line:sample} \texttt{LB-Query}|mb.id,R_i \right \rangle)$
  \State\hspace{\algorithmicindent}
  \textbf{wait until} $|samples[mb.id]|=d$ or $\tau$ timeout \textbf{do}
  \State\hspace{\algorithmicindent}\hspace{\algorithmicindent}
  \textbf{if} $|samples[mb.id]|=0$ \textbf{then}
  \State\hspace{\algorithmicindent}\hspace{\algorithmicindent}\hspace{\algorithmicindent}
  \textbf{trigger} $\left \langle \textsc{PAB-Broadcast}|mb \right \rangle$\label{line:no-replies}
  \State\hspace{\algorithmicindent}\hspace{\algorithmicindent}\hspace{\algorithmicindent}
  \textbf{return}
  \State\hspace{\algorithmicindent}\hspace{\algorithmicindent}
  \textbf{find} $r_{p}\in samples[mb.id]$ with the smallest $w$
  \State\hspace{\algorithmicindent}\hspace{\algorithmicindent}
  \textsf{starttimer}$(\texttt{Forward},\tau^{\prime},mb)$ \label{line:forward-timer}
  \State\hspace{\algorithmicindent}\hspace{\algorithmicindent}
  $banList.\textsf{Append}(r_p)$ \label{line:addbanlist}
  \fangyu{\Comment{every proxy is put in $banList$}}
  \State\hspace{\algorithmicindent}\hspace{\algorithmicindent}
  $\textsf{Send}(r_{p},\left \langle \texttt{LB-Forward}|mb, R_i\right \rangle)$ \textcolor{gray}{\Comment{send $mb$ to the poxy}}
  \State\hspace{\algorithmicindent}\hspace{\algorithmicindent}
  \textbf{wait until} \texttt{PAB-Proof} over $mb$ is received or $\tau^{\prime}$ timeout \textbf{do}
  \State\hspace{\algorithmicindent}\hspace{\algorithmicindent}\hspace{\algorithmicindent}
  \textbf{if} $\tau^{\prime}$ \textbf{do} \texttt{LB-ForwardLoad}($mb$)\label{line:resend}
  \State\hspace{\algorithmicindent}\hspace{\algorithmicindent}\hspace{\algorithmicindent}
  \fangyu{\textbf{else} $banList.\textsf{Remove}(R_p)$ \label{line:removebanlist}
  \hspace{-.35em}\Comment{$R_p$ is removed from $banList$}
  }
  \State \textbf{upon receipt} $\left \langle \texttt{LB-Forward}|mb,r \right \rangle$ \textbf{do} \textcolor{gray}{\Comment{if $R_i$ is the proxy with $mb$}}
  \State\hspace{\algorithmicindent}
  \textbf{trigger} $\left \langle \textsc{PAB-Broadcast}|mb \right \rangle$
  \Statex
  \State \textbf{upon receipt} $\left \langle \texttt{LB-Query}|id,r \right \rangle$ \textbf{do} \textcolor{gray}{\Comment{if $R_i$ is sampled}}
  \State\hspace{\algorithmicindent}
  $w\leftarrow \textsf{GetLoadStatus}()$
  \State\hspace{\algorithmicindent}
  $\textsf{Send}(r, \left \langle \texttt{LB-Info}|w, id, R_i \right \rangle)$ \label{line:pick-replica-end}
  \Statex
  \State \textbf{upon receipt} $\left \langle \texttt{LB-Info}|w, id, r \right \rangle$ \textbf{do} \textcolor{gray}{\Comment{if $R_i$ is \textit{busy}}}
  \State\hspace{\algorithmicindent}
  $samples[id][R_i]\leftarrow w$
  \Statex
  \State\textbf{upon receipt} $\left \langle \texttt{PAB-Proof}|id,\sigma \right \rangle$ before $\tau^{\prime}$ timeout \textbf{do}
  \State\hspace{\algorithmicindent}
  \textbf{if} $\textsf{threshold-verify}(id,\sigma)$ is not \textbf{true} \textbf{do} \textbf{return}
  \State\hspace{\algorithmicindent}
  \textbf{trigger} $\left \langle \textsc{PAB-Ava}|id, \sigma \right \rangle$ \textcolor{gray}{\Comment{$R_i$ takes over the \textit{recovery} phase}}
  
  \Statex
  \fangyu{\State\textbf{upon event} $\left \langle \textsc{Reset} | \textit{banList} \right \rangle$}
  \State\hspace{\algorithmicindent}
  $banList\leftarrow\{\}$
  \textcolor{gray}{\Comment{clear \textit{banList} periodically}}
  \label{line:continue-recovery}
  \end{algorithmic}
\end{algorithm}

Algorithm~\ref{alg:balance} depicts the \textsf{LB-ForwardLoad} procedure and relevant handlers. 
Upon the generation of a new microblock $mb$, the replica first checks whether it is \textit{busy} (see Section~\ref{subsec:estimation}).
If so, it invokes the \textsf{LB-ForwardLoad}$(mb)$ procedure to forward $mb$ to the proxy; otherwise, it broadcasts $mb$ using PAB by itself.
To select a proxy, a replica samples load status from $d$ random replicas (excluding itself) within a timeout of $\tau$ (Line~\ref{line:sample}).
Upon receiving a workload query, a replica obtains its current load status by calling the \textsf{GetLoadStatus}() (see Section~\ref{subsec:estimation}) and piggybacks it on the reply (Line 23-25). 
If the sender receives all the replies or times out, it picks the replica that replied with the smallest workload and sends $mb$ to it.
This proxy then initiates a PAB instance for $mb$ and sends the \texttt{PAB-Proof} message back to the original sender when a valid proof over $mb$ is generated.
Note that if no replies are received before timeout, the sending replica initiates a PAB instance by itself (Line~\ref{line:no-replies}).
Note that due to the decoupling design of Stratus, the overhead introduced by load forwarding has negligible impact on consensus.
\fangyu{To prevent a malicious replica from sending a small batch to reduce the performance, every replica can set a minimum batch size for receiving a batch.}
% \fangyu{In the normal case, load forwarding introduces only one round of additional communication for excess loads, which incurs less latency as compared to gossip or tree-based dissemination.
% To further reduce the overhead, load sampling can happen periodically instead of for each $\left \langle \textsc{NewMB}|mb \right \rangle$ event}.

\fangyu{
In Stratus, each replica randomly and independently chooses $d$ replicas from the remaining $N$-1 replicas.
Since the workload of each replica changes quickly, \textbf{the sampling happens for each microblock} without blocking the forwarding process.
Therefore, for each load balancing event of an overloaded replica A, the probability that a specific replica (other than replica A) is chosen by replica A is $d/(N$-1). 
The probability that a replica is chosen by all replicas is very small. For example, when $d=3$ and $N=100$ the probability that a replica is chosen by more than 7 replicas is about $0.03$.
We omit the analysis details due to the page limit.
Next, we discuss how we handle Byzantine behaviors during load forwarding.
}

\textbf{Handling faults.} A sampled Byzantine replica can pretend to be unoccupied by responding with a low busy level and censoring the forwarded microblocks.
In this case, the \textbf{SMP-Inclusion} would be compromised: the transactions included in the censored microblock will not be proposed.
We address this issue as follows.
A replica $r$ sets a timer before sending $mb$ to a selected proxy $p$ (Line~\ref{line:forward-timer}).
If $r$ does not receive the available proof $\sigma$ over $mb$ before the timeout, $r$ re-transmits $mb$ by re-invoking the \textsf{LB-ForwardLoad}$(mb)$ (Line~\ref{line:resend}).
Here, the unique microblock ids prevent duplication.
The above procedure repeats until a valid $\sigma$ over $mb$ is received.
Then $r$ continues the \textit{recovery} phase of the PAB instance with $mb$ by triggering the \textsc{PAB-Ava} event (Line~\ref{line:continue-recovery}).

To prevent Byzantine replicas from being sampled again, we use a \textit{banList} to store proxies that have not finished the \textit{push} phase of a previous PAB instance.
\fangyu{That is, before a busy sender sends a microblock $mb$ to a proxy, the proxy is added to the \textit{banList} (Line~\ref{line:addbanlist}).
}
For future sampling, the replicas in the \textit{banList} are excluded.
\fangyu{As long as the sender receives a valid proof message for $mb$ from the proxy before a timeout, the proxy will be removed from the \textit{banList} (Line~\ref{line:removebanlist})}.
\fangyu{The \textit{banList} is periodically cleared by a timer to avoid replicas from being banned forever (Line~\ref{line:continue-recovery}).}
% \fangyu{cleared after a certain period of time = periodically cleared? Slightly ambiguous.}
Note that more advanced \textit{banList} mechanisms can be used based on proxies' behavior~\cite{aardvark} and we consider to include them in our future work.
%Here we only use a na\"ive \textit{banList} to show the spirit. 

\subsection{Workload Estimation} \label{subsec:estimation}
Our workload estimator runs locally on an on-going basis and is responsible for estimating \textit{load status}.
Specifically, it determines: (i) whether the replica is overloaded, and (ii) how much the replica is overloaded, which correspond to the two functions in Algorithm~\ref{alg:balance}, \textsf{IsBusy}() and \textsf{GetLoadStatus}(), respectively.
To evaluate replicas' load status, two ingredients need to be considered: workload and capacity.
% Although workload (e.g., disseminating microblocks) is easy to estimate, available resources (e.g., CPU and bandwidth) for the workload are difficult to estimate.
As well, the estimated results must be comparable across replicas in a heterogeneous network.

To address these challenges, we use \textit{stable time} (ST) to estimate a replica's load status.
The stable time of a microblock is measured from when the sender broadcasts the microblock until the time that the microblock becomes stable (receiving $f+1$ acks).
To estimate ST of a replica, the replica calculates the ST of each microblock if it is the sender and takes the $n$-th (e.g., $n=95$) percentile of the ST values in a window of the latest stable microblocks.
\arxiv{Figure~\ref{fig:workload-estimation} shows the estimation process.}
The estimated ST of a replica is updated when a new microblock becomes stable.
The window size is configurable and we use $100$ as the default size.

\begin{figure}[t]
    \centering
    \includegraphics[width=0.4\textwidth]{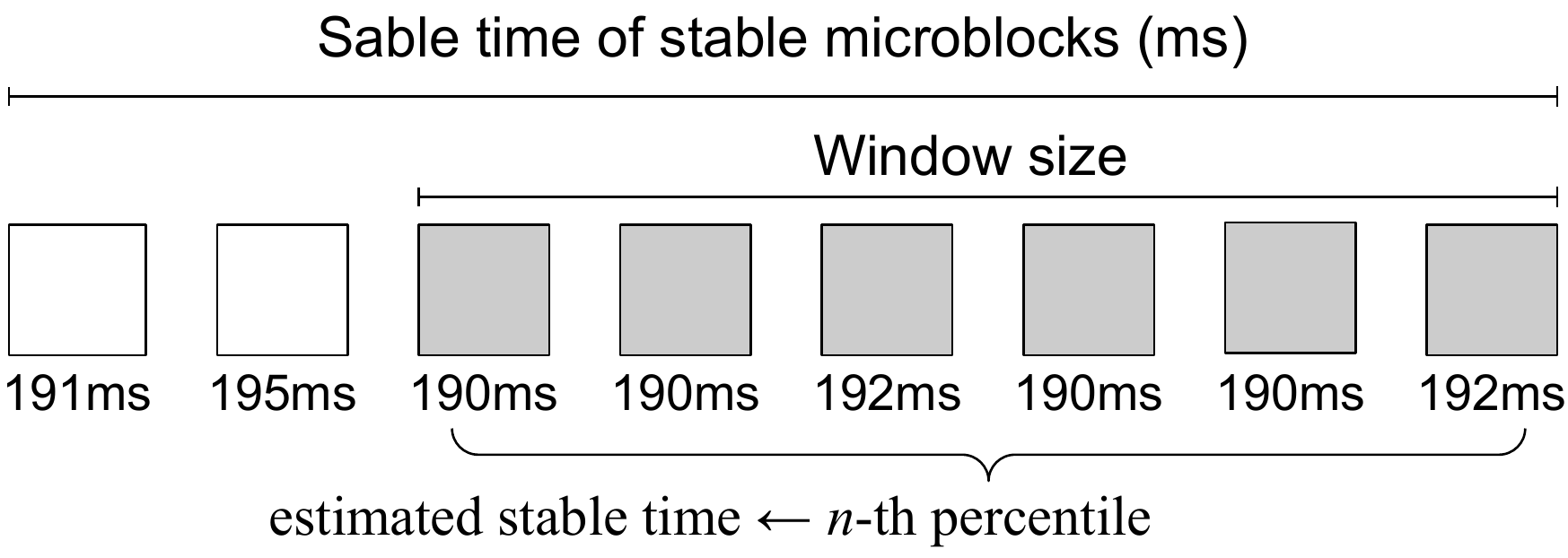}
    \caption{The stable time (ST) of a replica is estimated by taking the $n$-th percentile of ST values over a window of latest stable microblocks. The window slides when new microblocks become stable.}
    \label{fig:workload-estimation}
\end{figure}

\begin{figure}[t]
\centering
\begin{subfigure}{0.24\textwidth}
    \includegraphics[width=\textwidth]{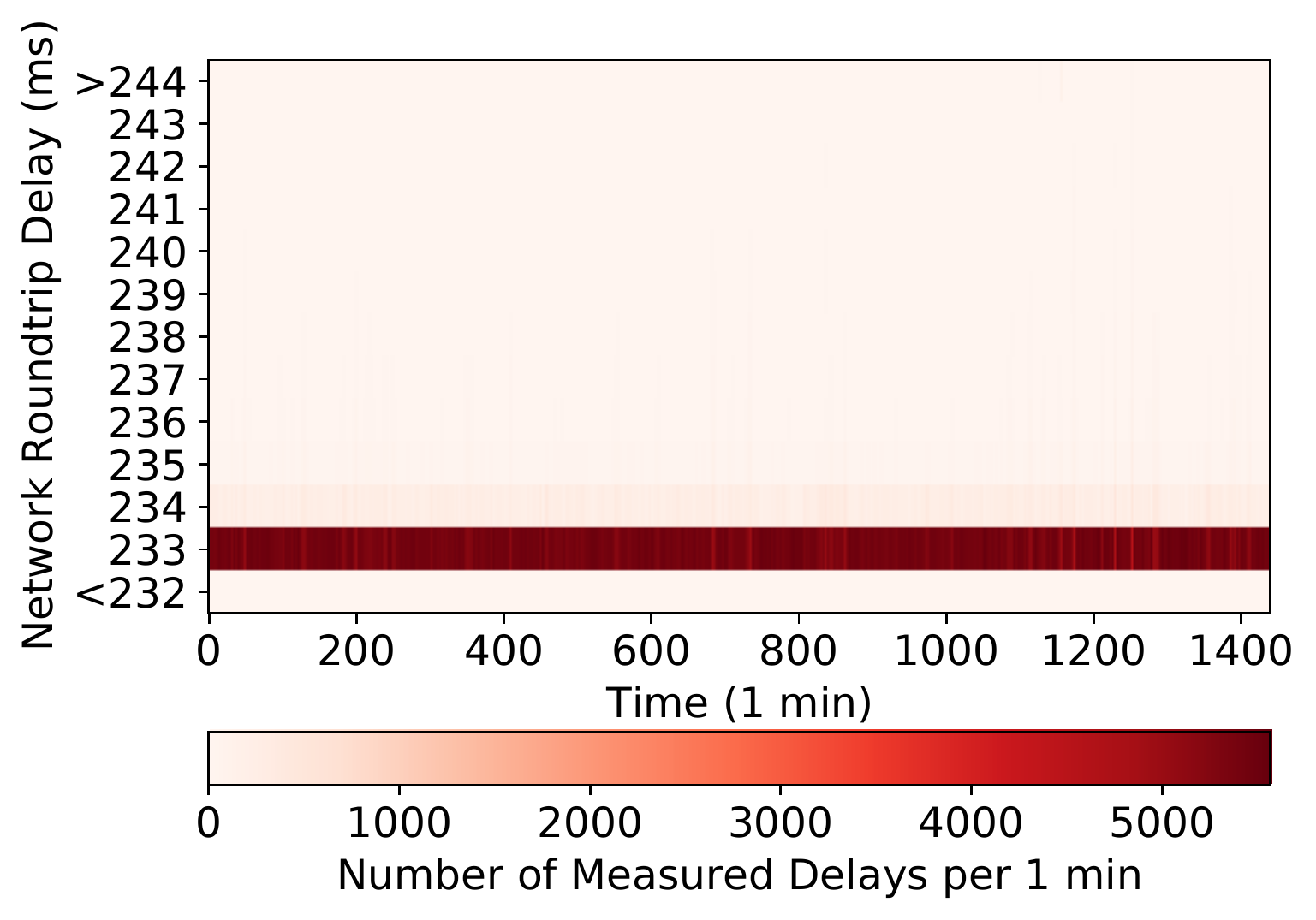}
    \caption{Heat map of measured roundtrip delays between servers from Virginia to Singapore over 24 hours.}
    \label{fig:heatmap}
\end{subfigure}
\hfill
\begin{subfigure}{0.24\textwidth}
    \includegraphics[width=\textwidth]{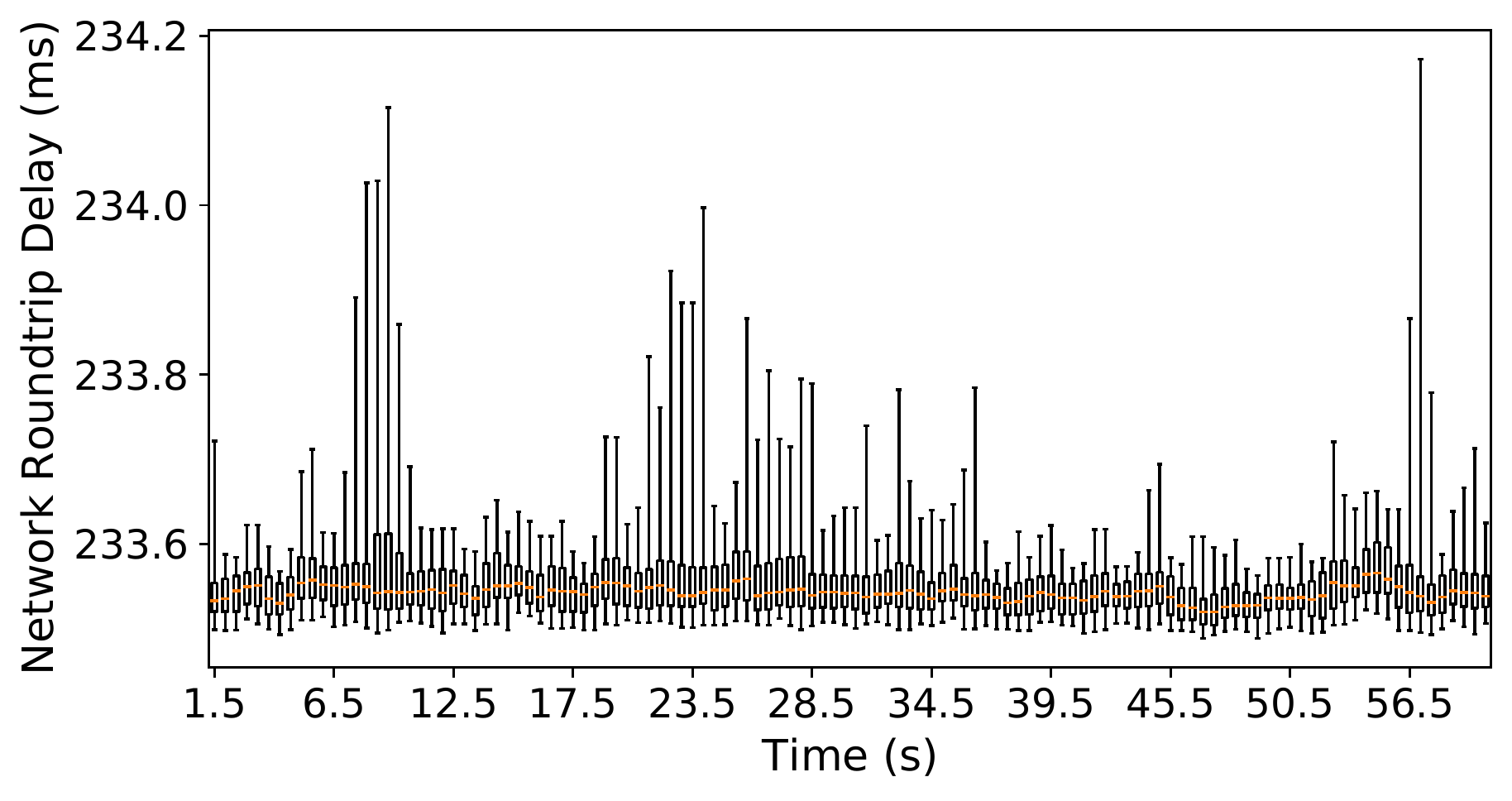}
    \caption{Distribution of measured delays between servers from Virginia to Singapore during 1 minute at 12th hour.}
    \label{fig:dist-delay}
\end{subfigure}
\caption{Network roundtrip delays between Virginia and Singapore.}
\label{fig:delay-measurements}
\end{figure}

Our approach is based on two observations.
First, the variability in network delay in a private network is small~\cite{domino}.
Second, network delay increases sharply when a node is overloaded.
\fangyu{The above observations are based on our measurements. A selection of these is shown in Figure~\ref{fig:delay-measurements}.
Figure~\ref{fig:heatmap} is a heat map of measured delays between two regions (Virginia and Singapore) in Alibaba Cloud over 24 hours.
Figure~\ref{fig:dist-delay} exhibits the round-trip delay distribution during 1 minute starting the 12th hour in the measurements.
We omit measurements of other pair of datacenters in this paper.
Our results demonstrate that the inter-datacenter network delays across different regions are stable and predictable based on recent measurement data.
}
Thus, under a constant workload, the calculated ST should be at around a constant number $\alpha$ with an error of $\epsilon$.
If the estimated ST is larger than $\alpha+\epsilon$ by a parameter of $\beta$, a replica is considered \textit{busy} (return \textit{true} in the \textsf{IsBusy} function).
Additionally, the value of ST reflects the degree to which a replica is loaded: the smaller the ST, the more resources a replica has for disseminating microblocks.
Therefore, we use the ST as the return value of the function \textsf{GetLoadStatus}.
Note that the \textsf{GetLoadStatus} returns a \textsc{NULL} value if the calling replica is \textit{busy}.
Also note that due to network topology, the ST value does not faithfully reflect the load status across replicas.
For example, some replicas may have a smaller ST because they are closer to a quorum of other replicas.
In this case, forwarding excess load to these replicas also benefits the system.
For overloaded replicas with large ST values, the topology has a negligible impact.
In case the network is unstable, we can also estimate the load status by monitoring the queue length of the network interface card.
We save that for future work.

% \begin{remark}
% Our above solution is based on the push method instead of the pull method for two reasons.
% First, studies show that in database applications, few replicas or keys are hot and have the most access~\cite{}. 
% Under these cases, the push method is more efficient than the pull method. Second, a push method is more robust against Byzantine behaviors. This is because in a pull-based method, an adversary can actively censor transactions pulling from others, while in a push-based method, it can only censor transactions pushed by others.

\section{Implementation} \label{sec:implementation}

We prototyped Stratus\footnote{Available at https://github.com/gitferry/bamboo-stratus} in Go with Bamboo~\cite{bamboo}\footnote{Available at https://github.com/gitferry/bamboo}, which is an open source project for prototyping, evaluating, and benchmarking BFT protocols.
Bamboo provides validated implementations of state-of-the-art BFT protocols such as PBFT~\cite{pbft}, HotStuff~\cite{hotstuff}, and Streamlet~\cite{streamlet}.
Bamboo also supplies common functionalities that a BFT replication protocol needs.
% , including network communication (point-to-point reliable communication via TCP), Pacemaker (for view-changes)~\cite{hotstuff}, state machine of a key-value store, client API, and mempool.
In our implementation, we replaced the mempool in Bamboo with Stratus shared mempool.
Because of Stratus' well-designed interfaces, the consensus core is minimally modified.
\fangyu{We used HotStuff's Pacemaker for view change, though Stratus is agnostic to the view-change mechanism.}
Similar to~\cite{hotstuff,s-dumbo}, we use ECDSA to implement the quorum proofs in PAB instead of threshold signature.
This is because the computation efficiency of ECDSA\footnote{We trivially concatenate f + 1 ECDSA signatures} is better than Boldyreva’s threshold signature~\cite{s-dumbo}.
Overall, our implementation added about 1,300 lines of Go to Bamboo.

\vspace{1mm}
\noindent \textbf{Optimizations.}
Since microblocks consume the most bandwidth, we need to reserve sufficient resources for consensus messages to ensure progress.
For this, we adopt two optimizations.
First, we prioritize the transmission and processing of consensus messages.
% In particular, we process consensus messages (i.e., proposals and votes) and other messages (mainly microblocks) in two separate channels, the consensus channel, and the data channel, respectively.
% And we give the consensus channel a higher priority: whenever a consensus message arrives, it is processed first.
Second, we use a token-based limiter to limit the sending rate of data messages: every data message (i.e., microblock) needs a token to be sent out, and tokens are refilled at a configurable rate.
This ensures that the network resources will not be overtaken by data messages.
\fangyu{The above optimizations are specially designed for Stratus and are only used in Stratus-based implementations.
We did \emph{not} use these optimizations in non-Stratus protocols in our evaluation since they may negatively effect those protocols.}

\begin{table}[t]
  \caption{Summary of evaluated protocols.}
  \label{table:protocols}
\begin{center}
\footnotesize
% \begin{tabular}{ | m{1.7cm} | m{1.2cm}| m{1.4cm} | m{1cm} | m{1.5cm} |} 
\begin{tabular}{ | m{2cm} | m{5.8cm} | } 
  \hline
  \textbf{Acronym}& \textbf{Protocol description} \\ 
  \hline
  \textbf{N-HS} & Native HotStuff
%   ~\cite{hotstuff}
  without a shared mempool \\ 
  
  \fangyu{\textbf{N-PBFT}} & \fangyu{Native PBFT}
%   ~\cite{streamlet}
  without a shared mempool \\ 
  
  \textbf{SMP-HS} & HotStuff integrated with a simple shared mempool \\ 
  
  \textbf{SMP-HS-G} & SMP-HS with gossip instead of broadcast \\ 
  
  \fangyu{\textbf{SMP-HS-Even}} & \fangyu{SMP-HS with an even workload across replicas}\\ 
%   \textbf{Narwhal
%   ~\cite{narwal}
%   } & HotStuff based shared mempool \\ 
  \hline
  \textbf{S-HS} & HotStuff integrated with \textbf{Stratus (this paper)}\\ 

  \fangyu{\textbf{S-PBFT}} & \fangyu{PBFT integrated with \textbf{Stratus (this paper)}}\\ 
  \hline
  \textbf{Narwhal
%   ~\cite{narwal}
  } & HotStuff based shared mempool \\ 
  \fangyu{\textbf{MirBFT}}
%   ~\cite{narwal}
   & \fangyu{PBFT based multi-leader protocol} \\
  \hline
\end{tabular}
\end{center}
\end{table}

\section{Evaluation} \label{sec:evaluation}
Our evaluation answers the following questions.
\begin{itemize}[leftmargin=*]
    \item \textbf{Q1:} how does Stratus perform as compared to the alternative Shared Mempool implementations with a varying number of replicas? (Section~\ref{sec:eval-scalability})
    \item \textbf{Q2:} how do missing transactions caused by network asynchrony and Byzantine replicas affect the protocols' performance? (Section~\ref{sec:eval-byz})
    \item \textbf{Q3:} how does unbalanced load affect protocols' throughput? (Section~\ref{sec:eval-unbalance})
\end{itemize}

\subsection{Setup} \label{sec:eval-setup}

\noindent\textbf{Testbeds.}
We conducted our experiments on Alibaba Cloud \textsf{ecs.s6-c1m2.xlarge} instances\footnote{https://www.alibabacloud.com/help/en/doc-detail/25378.htm}.
Each instance has 4vGPUs and 8GB memory and runs Ubuntu server 20.04.
We ran each replica on a single ECS instance.
We performed protocol evaluations in LANs and WANs to simulate \textit{national} and \textit{regional} deployments, respectively~\cite{kauri}.
\fangyu{
LANs and WANs are typical deployments of permissioned blockchains and permissionless blockchains that run a BFT-based PoS consensus protocol~\cite{Tendermint,Dapper}.
}
In LAN deployments, a replica has up to 3 Gb/s of bandwidth and inter-replica RTT of less than 10 ms.
For WAN deployments, we use NetEm~\cite{netem} to simulate a WAN environment with 100 ms inter-replica RTT and 100 Mb/s replica bandwidth.

\vspace{1mm}
\noindent\textbf{Workload.}
Clients are run on 4 instances with the same specifications. \fangyu{Each client concurrently sends multiple transactions to different replicas}.
Bamboo's benchmark provides an in-memory key-value store backed by the protocol under evaluation.
Each transaction is issued as a simple key-value set operation submitted to a single replica.
Since our focus is on the performance of the consensus protocol with the mempool, we do not involve application-specific verification (including signatures) and execution (including disk IO operations) of transactions in our evaluation.
We measure both throughput and latency on the server side.
The latency is measured between the moment a transaction is first received by a replica and the moment the block containing it is committed.
We avoid end-to-end measurements to exclude the impact of the network delay between a replica and a client.
\fangyu{Each data point is obtained when the measurement is stabilized (sampled data do not vary by more than 1\%) and is an average over 3 runs.}
In our experiments, workloads are evenly distributed across replicas except for the last set of experiments (Section~\ref{sec:eval-unbalance}), in which we create skewed load to evaluate load balancing.

\vspace{1mm}
\noindent\textbf{Protocols.}
We evaluate the performance of a wide range of protocols (Table~\ref{table:protocols}).
\fangyu{We use native HotStuff and PBFT with the original mempool as the baseline, denoted as (N-HS and N-PBFT, respectively).
All of our implementations of HotStuff are based on the Chained-HotStuff (three-chain) version from the original paper~\cite{hotstuff}, in which pipelining is used and leaders are rotated for each proposal.
Our implementation of PBFT shares the same chained blockchain structure as Chained-HotStuff for a fair comparison.
% The difference is that in PBFT votes are sent using all-to-all communication while HotStuff uses one-to-all/all-to-one patterns.
}
We also compare against a version of HotStuff with a basic shared mempool with best-effort broadcast and fetching (denoted as SMP-HS).
\fangyu{Finally, we equip HotStuff and PBFT with our Stratus Mempool, denoted as S-HS and S-PBFT, respectively.}
We also implemented a gossip-based shared mempool (distributing microblocks via gossip), denoted by SMP-HS-G, to evaluate load balancing and compare it with S-HS.
All protocols are implemented using the same Bamboo code base for a fair comparison.
The sampling parameter $d$ is set to $1$ by default.
\fangyu{
This is because $d=1$ allows the busy sender to randomly pick exactly one replica without comparing workload status between others.
When we gradually increase $d$, the chance of selecting a less busy replica increases significantly.
However, increasing $d$ also incurs overhead.
In our experiments (Section~\ref{sec:eval-unbalance}) we show that $d=3$ exhibits the best performance.
}

We also compare against Narwhal\footnote{Available at https://github.com/facebookresearch/narwhal/}, which uses a shared mempool with reliable broadcast.
Narwhal is based on HotStuff and splits functionality between workers and primaries, responsible for transaction dissemination and consensus, respectively.
To fairly compare Narwhal with Stratus, we let each primary have one worker and locate both in one VM instance.
\fangyu{As another baseline, we compare our protocols with MirBFT~\cite{mir-bft}\footnote{Available at https://github.com/hyperledger-labs/mirbft/tree/research}, a state-of-the-art multi-leader protocol.
All replicas act as leaders in an epoch for fair comparison.}

\subsection{Scalability}\label{sec:eval-scalability}

\arxiv{
In the first set of experiments, we explore the impact of batch sizes on S-HS and then we evaluate the scalability of protocols.
% Then we vary the number of replicas from 16 to 400 to evaluate the scalability of the protocols in both LANs and WANs.
These experiments are run in a common BFT setting in which less than one-third of replicas remain silent.
Since our focus is on normal-case performance, view changes are not triggered in these experiments unless clearly stated.
}

\begin{figure}[t]
    \centering
    \includegraphics[width=0.4\textwidth]{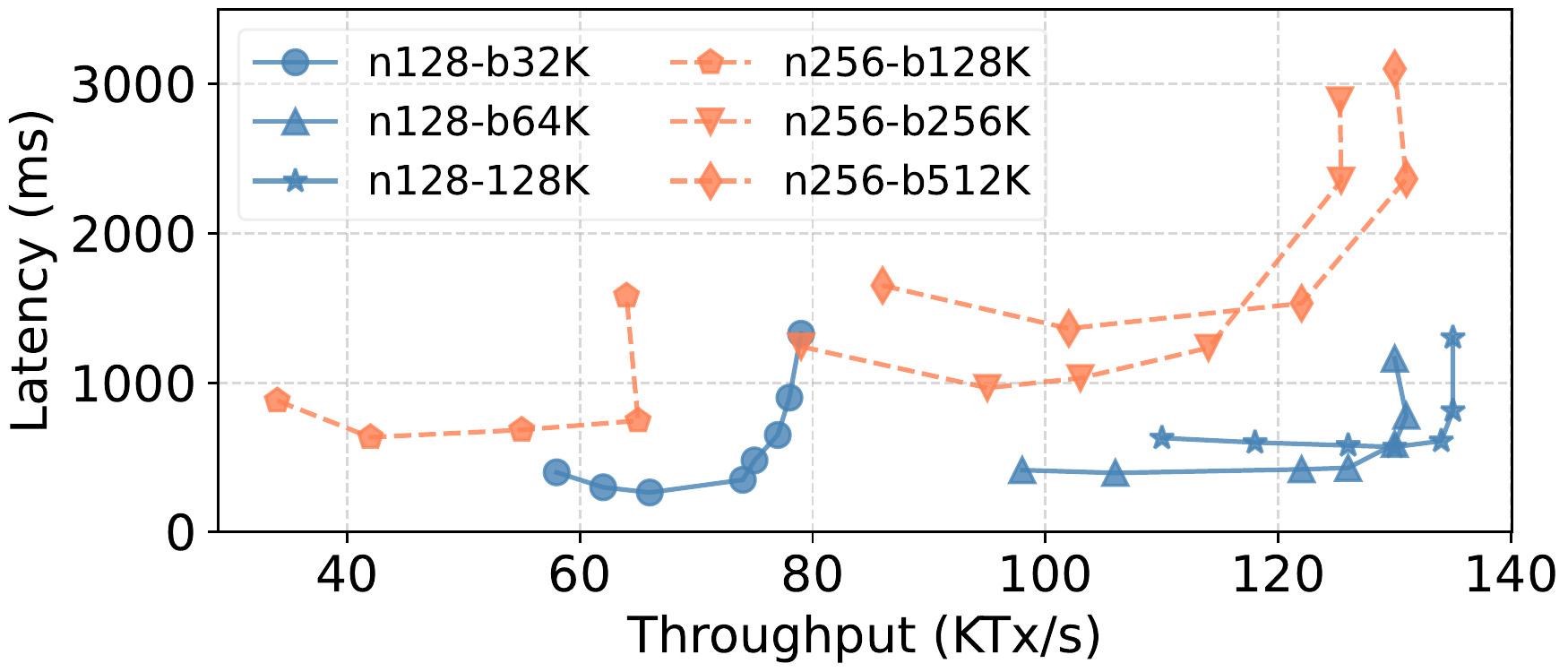}
    \caption{Throughput vs. latency with $128$ and $256$ replicas for S-HS. The batch size varies from 32KB to 512KB. The transaction payload is $128$ bytes.}
    \vspace{-0.5cm}
    \label{fig:batch-size}
\end{figure}

\vspace{1mm}
\arxiv{
\noindent\textbf{Picking a proper batch size.}
Batching more transactions in a microblock can increase throughput since the message cost is better amortized (e.g., fewer acks).
However, batching also leads to higher latency since it requires more time to fill a microblock.
In this experiment, we study the impact of batch size on Stratus (S-HS) and pick a proper batch size for different network sizes to balance throughput and latency.
}

\arxiv{
We deploy Stratus-based HotStuff (S-HS) in a LAN setting with $N=128$ and $N=256$ replicas, respectively.
For $N=128$, we vary the batch size from 32KB to 128KB, while for $N=256$, we vary the batch size from 128KB to 512KB.
We denote each pair of settings as the network size followed by the batch size.
For instance, the network size of $N=128$ with the batch size of 32KB bytes is denoted as $n128-b32$K.
We use the transaction payloads of $128$ bytes (commonly used in blockchain systems~\cite{bitcoin,diem}).
We gradually increase the workload until the system is saturated, i.e., the workload exceeds the maximum system throughput, resulting in sharply increasing delay.
}

\arxiv{
The results are depicted in Figure~\ref{fig:batch-size}.
We can see that as the batch size increases, the throughput improves accordingly for both network sizes.
However, the throughput gain of choosing a larger batch size is reduced when the batch size is beyond 64KB (for $N=128$) and 256KB (for $N=256$).
Also, we observe that a larger network requires a larger batch size for better throughput.
This is because large batch size amortizes the overhead of PAB (fewer acks).
But, a larger batch size leads to increased latency (as we explained previously).
We use the batch size of 128KB for small networks ($N\leq128$), the batch size of 256KB for large networks ($N\geq256$), and a $128$-byte transaction payload in the rest of our experiments.
As long as a replica accumulates sufficient transactions (reaching the batch size), it produces and disseminates a microblock.
If the batch size is not reached before a timeout (\SI{200}{\ms} by default), all the remaining transactions will be batched into a microblock.
We also find that proposal size (number of microblock ids included in a proposal) does not have obvious impact on the performance as long as a proper batch size (number of transactions included in a microblock) is chosen.
Therefore, we do not set any constraint on proposal size.
The above settings also apply in SMP-HS and SMP-HS-G.
}

\begin{figure}[t]
\centering
\begin{subfigure}{0.45\textwidth}
    \includegraphics[width=\textwidth]{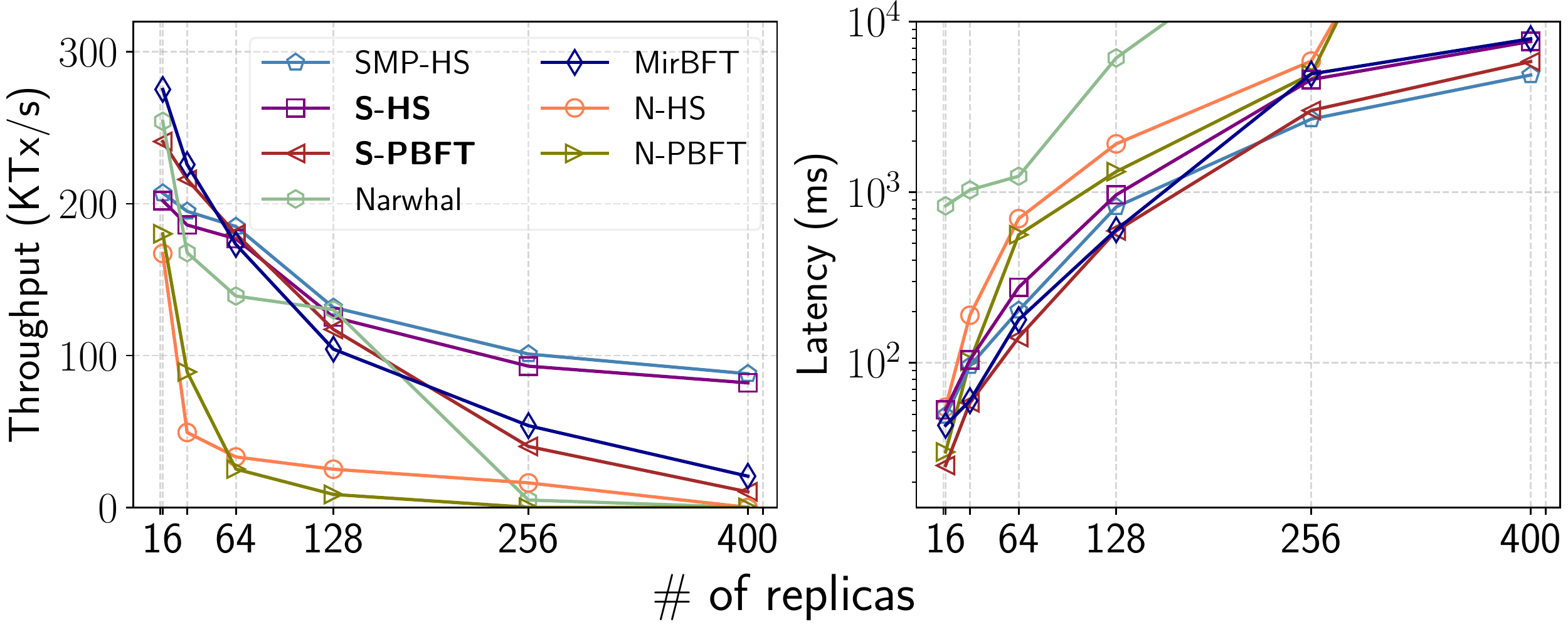}
    \caption{LAN evaluation.}
    \label{fig:scalability-lan}
\end{subfigure}
\hfill
\begin{subfigure}{0.45\textwidth}
    \includegraphics[width=\textwidth]{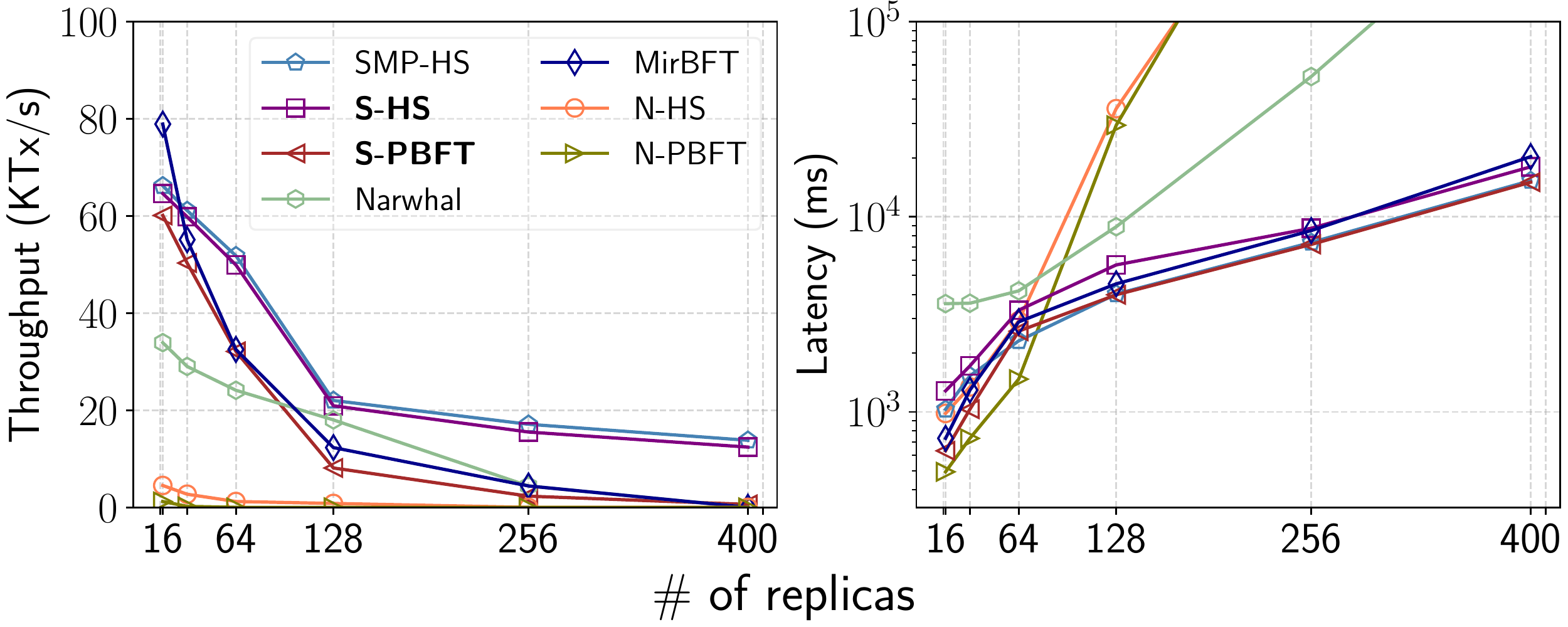}
    \caption{WAN evaluation.}
    \label{fig:scalability-wan}
\end{subfigure}
\caption{The throughput (left) and latency (right) of protocols in both LAN and WAN with increasing number of replicas. We use $128$-byte payload and 128KB batch size.}
\label{fig:scalability}
\end{figure}

We evaluate the scalability of the protocols by increasing the number of replicas from 16 to 400.
\fangyu{We use N-HS, N-PBFT, SMP-HS, S-PBFT, Narwhal, and MirBFT for comparison and run experiments in both LANs and WANs.
We gradually increase the workload until the system is saturated, i.e., the workload exceeds the maximum system throughput, resulting in sharply increasing delay.
}

\fangyu{We use a batch size of 256KB and $128$-byte transaction payload, which gives 2,000 transactions per batch, for Stratus-based protocols throughout our experiments.}
% As long as a replica accumulates sufficient transactions (reaching the batch size), it produces and disseminates a microblock.
% If the batch size is not reached before a timeout (\SI{200}{\ms} by default), all the remaining transactions will be batched into a microblock.
We find that proposal size (number of microblock ids included in a proposal) does not have an obvious impact on performance as long as we choose a proper batch size (number of transactions in a microblock).
Therefore, we do not constrain proposal size.
%Since all protocols have different data structures,
\fangyu{For every protocol we use a microblock/proposal size settings that maximizes the protocol's performance.
We omit experimental results that explore these settings due to space constraints.}

% the process of finding the best settings for all the protocols due to space constraints.

Figure~\ref{fig:scalability} depicts the throughput and latency of the protocols with an increasing number of replicas in LANs and WANs.
We can see that protocols using the shared mempool (SMP-HS, S-HS, S-PBFT, and Narwhal) or relying on multiple leaders outperform the native HotStuff and Streamlet (N-HS and N-PBFT) in throughput in all experiments.
Previous works~\cite{hotstuff,leopard,narwal} have also shown that the throughput/latency of N-HS decreases/increases sharply as the number of replicas increases, and meaningful results can no longer be observed beyond 256 nodes.
Although Narwhal outperforms N-HS due to the use of a shared mempool, it does not scale well since it employs the heavy reliable broadcast primitive.
As shown in~\cite{narwal}, Narwhal achieves better scalability only when each primary has multiple workers that are located in different machines.
\fangyu{MirBFT has higher throughput than S-HS when there are fewer than 16 replicas.
This is because Stratus imposes a higher message overhead than PBFT. However, MirBFT's performance drops faster than S-HS because of higher message complexity.
MirBFT is comparable to S-PBFT because they have the same message complexity.
The gap between them is due to implementation differences.
}

SMP-HS and S-HS show a slightly higher latency than N-HS when the network size is small ($<16$ in LANs and $<32$ in WANs).
This is due to batching.
They outperform the other two protocols in both throughput and latency when the network size is beyond 64 and show flatter lines in throughput as the network size increases.
The throughput of SMP-HS and S-HS achieve $5\times$ throughput when $N=128$ as compared to N-HS, and this gap grows with network size.
Finally, SMP-HS and S-HS have similar performance, which indicates that the use of PAB incurs negligible overhead, which is amortized by a large batch size.

\begin{table}[ht!]
  \caption{Outbound bandwidth consumption comparison with $N=64$ replicas.
  The bandwidth of each replica is throttled to 100 Mb/s.
  The results are collected when the network is saturated.}
  \label{table:consumption}
\begin{center}
\footnotesize
\begin{tabular}{ | c | c | c | c | c | } 
  \hline
  \multicolumn{2}{|c|}{Role/Messages} & N-HS & SMP-HS & \textbf{S-HS (this paper)}\\
  \hline \hline  
  \multirow{3}{*}{Leader} & Proposals & 75.4 & 4.7 & 9.8\\
  \cline{2-5}
  & Microblocks & N/A & 50.5 & 50.3\\
  \cline{2-5}
  & \textbf{SUM} & \textbf{75.4} & \textbf{55.2} & \textbf{60.1}\\
  \hline\hline
  \multirow{4}{*}{Non-leader} & Microblocks & N/A & 50.4 & 50.3\\
  \cline{2-5}
  & Votes & 0.5 & 2.5 & 2.4\\
  \cline{2-5}
  & Acks & N/A & N/A & 4.7\\
  \cline{2-5}
  & \textbf{SUM} & \textbf{0.5} & \textbf{52.9} & \textbf{57.4}\\
  \hline
\end{tabular}
\end{center}
\end{table}

\noindent\textbf{Bandwidth consumption.} We evaluate the outbound bandwidth usage at the leader and the non-leader replica in N-HS, SMP-HS, and S-HS.
We present the results in Table~\ref{table:consumption}.
We can see that the communication bottleneck in N-HS is at the leader, while the bandwidth of non-leader replicas is underutilized.
In SMP-HS and S-HS, the bandwidth consumption between leader replicas and non-leader replicas are more even, and the leader bottleneck is therefore alleviated.
We observe that S-HS adds around 10\% overhead on top of SMP-HS due to the use of PAB.
Next, we show that this overhead is worthwhile as it provides availability insurance.
\edits{We also observe that around 40\% of bandwidth remains unused.
This is because chain-based protocols are bounded by latency: each proposal goes through two rounds of communication (one-to-all-to-one).
We consider out-of-order processing of proposals for better network utilization as important future work.}

\subsection{Impact of Missing Transactions}~\label{sec:eval-byz}

\noindent Recall that in \textbf{Problem-I} (Section~\ref{sec:challenges}), a basic shared mempool with best-effort broadcast is subject to \textit{missing transactions}.
In the next set of experiments, we evaluate the throughput of SMP-HS and S-HS under a period of network asynchrony and Byzantine attacks.

\vspace{1mm}
\noindent \textbf{Network asynchrony.}
\fangyu{During network asynchrony, a proposal is likely to arrive before some of referenced transactions (i.e., missing transactions), which negatively impacts performance.
The point of this experiment is to show that Stratus-based protocols can make progress during view-changes and are more resilient to network asynchrony.}

We ran an experiment in a WAN setting, during which we induce a period of network fluctuation via NetEm.
The fluctuation lasts for \SI{10}{\s}, during which network delays between replicas fluctuate between \SI{100}{\ms} and \SI{300}{\ms} for each message (i.e., \SI{200}{\ms} base with \SI{100}{\ms} uniform jitter).
We set the view-change timer to be \SI{1000}{\ms}.
We keep the transaction rate at 25KTx/s without saturating the network.

We ran the experiment 10 times and each run lasts 30 seconds.
% The averaged throughput and variance of SMP-HS and S-HS over time are 
We show the results in Figure~\ref{fig:async}.
During the fluctuation, the throughput of SMP-HS drops to zero.
This is because missing transactions are fetched from the leader, which causes congestion at the leader.
As a result, view-changes are triggered, during which no progress is made.
When the network fluctuation is over, SMP-HS slowly recovers by processing the accumulated proposals.
On the other hand, S-HS makes progress at the speed of the network and no view-changes are triggered.
This is due to the \textbf{PAB-Provable Availability} property: no missing transactions need to be fetched on the critical consensus path.

\begin{figure}[t]
    \centering
    \includegraphics[width=0.45\textwidth]{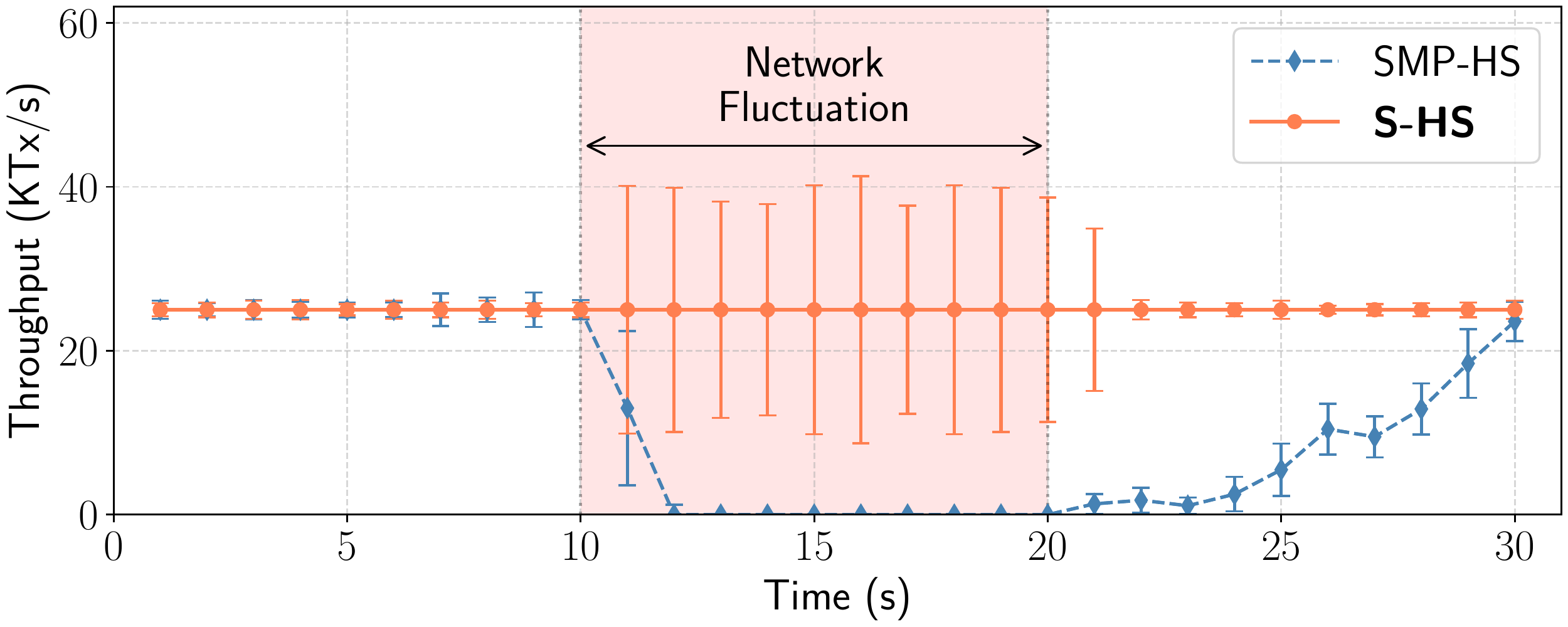}
    \caption{Delay is injected at time \SI{10}{\s} and lasts for \SI{10}{\s}.
    The transaction rate is 25KTx/s. Each point is averaged over 10 runs.}
    \label{fig:async}
\end{figure}

\vspace{1mm}
\noindent \textbf{Byzantine senders.}
The attacker's goal in this scenario is to overwhelm the leader with many missing microblocks.
% Note that it is difficult to identify which replicas are malicious (the sending replicas or the replicas that fetch missing transactions).

The strategies for each protocol are described as follows.
In SMP-HS, Byzantine replicas only send microblocks to the leader (Figure~\ref{fig:mempool-set}).
In S-HS, Byzantine replicas have to send microblocks to the leader and to at least $f$ replicas to get proofs.
Otherwise, their microblocks will not be included in a proposal (consider the leader is correct).
In this experiment, we consider two different quorum parameters for PAB (see Section~\ref{sec:discussion}), $f+1$ and $2f+1$ (denoted by S-HS-f and S-HS-2f, respectively).
These variants will explain the tradeoff between throughput and latency.
We ran this experiment in a LAN setting with $N=100$ and $N=200$ replicas (including the leader).
The number of Byzantine replicas ranged from $0$ to $30$ ($N=100$) and $0$ to $60$ ($N=200$).

\begin{figure}[t]
\centering
\begin{subfigure}{0.45\textwidth}
    \includegraphics[width=\textwidth]{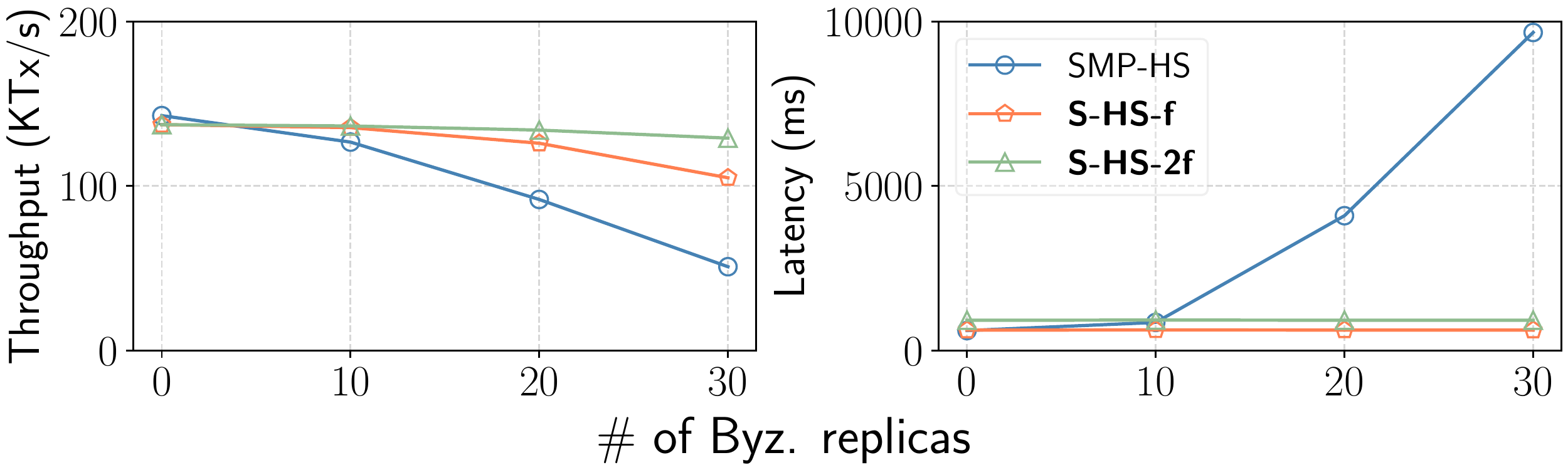}
    \caption{100 total replicas with 0 to 30 Byz. ones.}
    \label{fig:byz-100}
\end{subfigure}
\hfill
\begin{subfigure}{0.45\textwidth}
    \includegraphics[width=\textwidth]{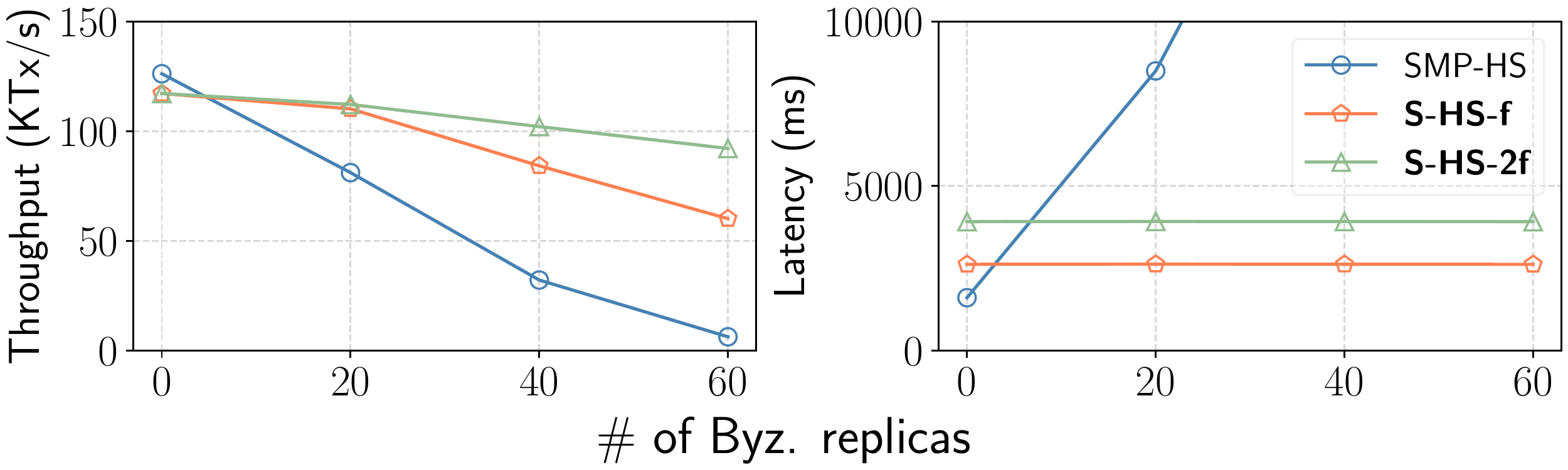}
    \caption{200 total replicas with 0 to 60 Byz. ones.}
    \label{fig:byz-200}
\end{subfigure}
\caption{Performance of SMP-HS and S-HS with different quorum parameters (S-HS-d1 and S-HS-d2) and increasing Byzantine replicas.}
\label{fig:byz}
\end{figure}

Figure~\ref{fig:byz} plots the results.
As the number of Byzantine replicas increases, the throughput/latency of SMP-HS decreases/increases sharply.
This is because replicas have to fetch missing microblocks from the leader before processing a proposal.
We also observe a slight drop in throughput of S-HS.
The reason is that only background bandwidth is used to deal with missing microblocks.
The latency of S-HS remains flat since the consensus will never be blocked by missing microblocks as long as the leader provides correct proofs.
In addition, we notice that Byzantine behavior has more impact on larger deployments.
With $N=200$ replicas, the performance of SMP-HS decreases significantly.
The throughput is almost zero when the number of Byzantine replicas is $60$ and the latency surges when there are more than $20$ Byzantine replicas.
Finally, S-HS-2f has better throughput than S-HS-f at the cost of higher latency as the number of Byzantine replicas increases.
The reason is that with a larger quorum size, fewer microblocks need to be fetched.
However, a replica needs to wait for more acks to generate available proofs.

\subsection{Impact of Unbalanced Workload}\label{sec:eval-unbalance}
Previous work~\cite{ethna,info-propagation,txprobe,Miller2015DiscoveringB} has observed that node degrees in large-scale blockchains have a power-law distribution.
As a result, most clients send transactions to a few popular nodes, leading to unbalanced workload (\textbf{Problem-II} in Section~\ref{sec:challenges}).
In this experiment, we vary the ratio of workload to bandwidth by using identical bandwidth for each replica but skewed workloads across replicas.
We use two Zipfian parameters~\cite{zipf-gen}, Zipf1 ($s=1.01, v=1$) and Zipf10 ($s=1.01, v=10$), to simulate a highly skewed workload and a lightly skewed workload, respectively.
We show the workload distributions in Figure~\ref{fig:distribution}.
\fangyu{For example, when $s=1.01$ and there are 100 replicas, $10\%$ of the replicas will receive over $85\%$ of the load.}
% and, when $s=1$, then $20\%$ of the replicas will receive over $40\%$ of the load.}

\begin{figure}[h]
    \centering
    \includegraphics[width=0.45\textwidth]{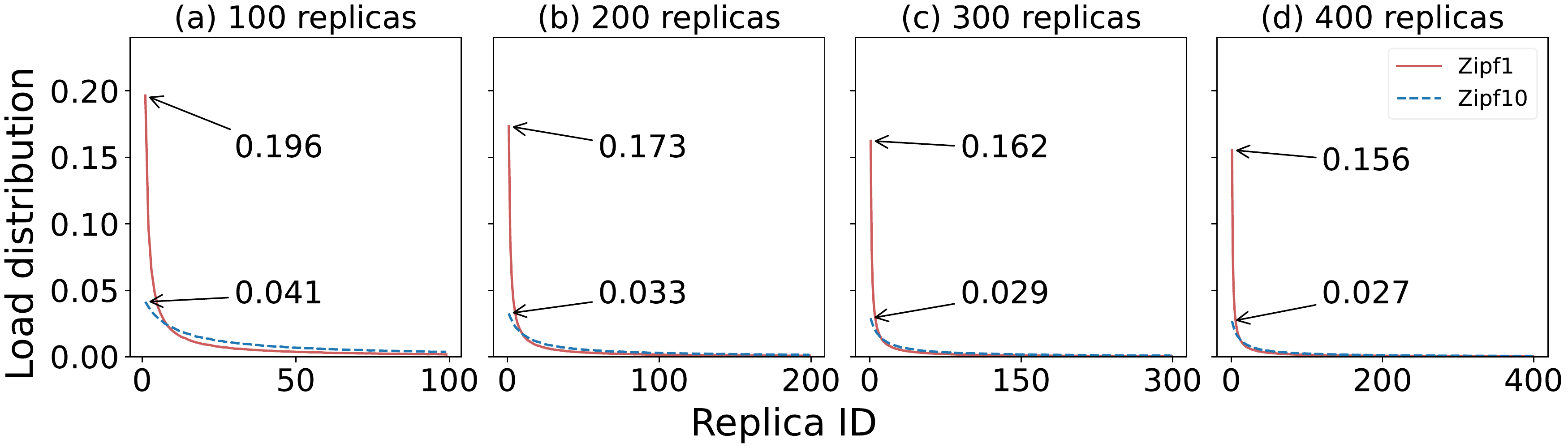}
    \caption{Workload distribution with different network sizes and Zipfian parameters.}
    \label{fig:distribution}
\end{figure}

We evaluate load-balancing in Stratus using the above distributions in a WAN setting.
Stratus samples $d$ replicas to select the least loaded one as the proxy, we consider $d=1,2,3$, denoted by S-HS-d1, S-HS-d2, and S-HS-d3, respectively.
We also use SMP-HS-G, HotStuff with a gossip-based shared mempool for comparison.
We set the gossip fan-out parameter to $3$.

\begin{figure}[t]
    \centering
    \includegraphics[width=0.45\textwidth]{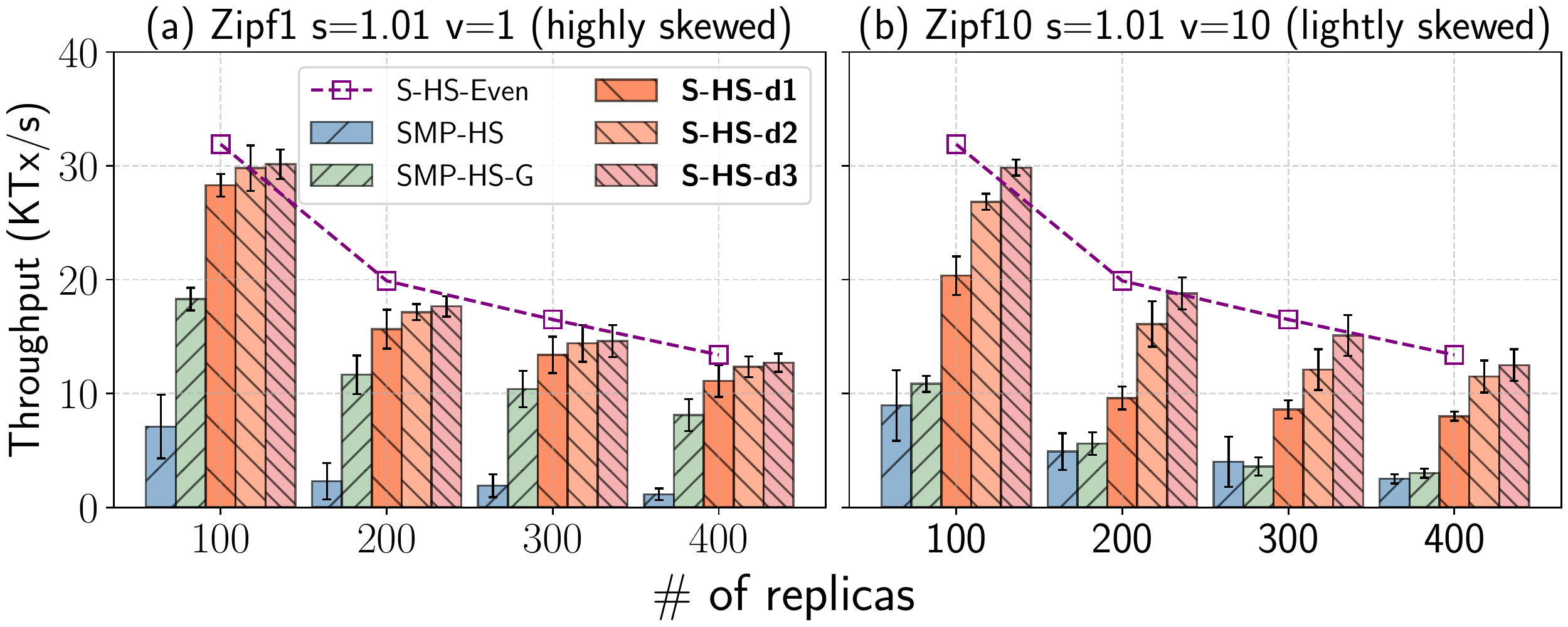}
    \caption{Throughput with different workload distribution.}
    \label{fig:throughput}
\end{figure}

Figure~\ref{fig:throughput} shows protocols' throughput.
We can see that S-HS-dX outperforms SMP-HS and SMP-HS-G in all experiments.
S-HS-dX achieves $5\times$ ($N=100$) to $10\times$ ($N=400$) throughput with Zipf1 as compared with SMP-HS.
SMP-HS-G does not scale well under a lightly skewed workload (Zipf10) due to the message redundancy.
We also observe that S-HS-dX achieves the best performance when $d=3$, while the gap between different $d$ values is not significant. % in skewed workloads.

\section{Discussion}\label{sec:discussion}

% Here we discuss several Byzantine attacks on PAB and the load balancing mechanism, and also how we address them.

% In discussing Problem-I in Section~\ref{sec:challenges}, we focused on a Byzantine attack that can cause missing transactions.
% This attack is easy to launch and has severe impact on the shared mempool (Section~\ref{sec:eval-byz}).
% Stratus eliminates this problem with PAB without increasing the attack surface.
% \fangyu{We now discuss attacks against PAB and the load balancing mechanism.}

\fangyu{
\textbf{Attacks on PAB.}
Byzantine replicas can create availability proofs and send them to fewer than $f$ replicas.
If the leader is correct, then a valid proposal is proposed with microblock ids and their availability proofs.
Using these, replicas can recover if a referenced microblock is missing.
The microblocks with missing proofs will be discarded after a timeout. % and the affected transactions can be re-sent by the clients.

Now consider a Byzantine leader that includes microblocks without availability proofs into a proposal.
This will trigger a view-change, which will replace the leader.
In some PoS blockchains~\cite{Tendermint,diem}, such leaders are also slashed.}

\fangyu{
\textbf{Attacks on load balancing.}
A Byzantine sender can try to to congest the network by sending identical microblocks to multiple proxies.
To mitigate this attack, we propose a simple solution.
When a busy replica $r$ decides on $r^{\prime}$ as the proxy, it forwards the microblock $mb$ to $r^{\prime}$ along with a message $\rho$ that contains $r$'s signature over $mb.id$ concatenated with $r^{\prime}$'s identity.
Then, $r^{\prime}$ broadcast $mb$ along with $\rho$ using PAB.
This allows other replicas to check if a microblock by the same sender is broadcast by different proxies.
Once detected, a replica can reject microblocks from this sender or report this behavior by sending evidence to the other replicas.
If the proxy fails to complete PAB, the original sender either broadcasts the microblock by itself or waits for a timeout to garbage collect the microblock.
}

\fangyu{
A malicious replica can pretend to be busy and forward its load to other replicas. This can be addressed with an incentive mechanism: a replica that produced the availability proof for a microblock using PAB is rewarded.
This information is verifiable because the availability proofs for each microblock are in the proposal and will be recorded on the blockchain if the proposal is committed.
In addition, to prevent a malicious senders from overloading a proxy, the proxy can set a limit on its buffer, and reject extra load.
}

\arxiv{
\textbf{Re-configuration.} Stratus can be extended to support adding or removing replicas.
For example, Stratus can subscribe to re-configuration events from the consensus engine.
When new replicas join or leave, Stratus will update its configuration.
Newly joined replicas may then fetch stable microblocks (i.e., ids with available proofs) to catch up.
}

\arxiv{
\textbf{Garbage collection.} To ensure that transactions remain available, replicas may have to keep the microblocks and relevant meta-data (e.g., acks) in case other replicas fetch them.
To garbage-collect these messages, the consensus protocol should inform Stratus that a proposal is committed and the contained microblocks can then be garbage collected.
}

\section{Conclusion and Future Work} \label{sec:conclusion}

We presented a shared mempool abstraction that resolves the leader bottleneck of leader-based BFT protocols.
We designed Stratus, a novel shared mempool protocol to address two challenges: missing transactions and unbalanced workloads.
Stratus overcomes these with an efficient provably available broadcast (PAB) and a load balancing protocol. For example, Stratus-HotStuff throughput is 5$\times$ to 20$\times$ higher than native HotStuff. % in LAN and WAN environments.
%With skewed workloads, Stratus-HotStuff achieves up to $10\times$ more throughput than HotStuff with a basic shared mempool protocol.
%Our experiments show that Stratus-HotStuff substantially outperforms native HotStuff, improving bandwidth by up to $5\times$ and $20\times$ in LAN and WAN environments with $128$ replicas, respectively.
%With skewed workloads, Stratus-HotStuff achieves up to $10\times$ more throughput than HotStuff with a basic shared mempool protocol.
In our future work, we plan to extend Stratus to multi-leader BFT protocols.

%\clearpage
\balance
\bibliographystyle{unsrt}
\bibliography{stratus}

\clearpage
\appendix

\section{Analysis}
In this section, we theoretically reveal the leader bottleneck of leader-based BFT protocols (LBFT) and then show how shared mempool addresses the issue. 
We consider the ideal performance, i.e., all replicas are honest and the network is synchronous.
We assume that the ideal performance is limited by the available processing capacity of each replica, denoted by $C$. 
%For simplicity, we assume that all transactions have the same size. 
%We assume all replicas have the same processing capacity, denoted by $C$. %(normalized in the unit of transactions)
For simplicity, we further assume that transactions have the same size $B$ (in bits).
We use $T_{max}$ to denote the maximum throughput, i.e., number of transactions per second.
We use $W_{l}$ (resp. $W_{nl}$) to denote the workload of the leader (resp. a non-leader replica) for confirming a transaction. 
%Hence, to proceed with a transaction, the workloads for the leader and a non-leader replica are $B \times E_{l}$ and $B \times E_{nl}$, respectively. 
Furthermore, we have
\[T_{max} = \min\left \{\frac{C}{W_{l}}, \frac{C}{W_{nl}}\right \}.\]
Since each replica has to receive and process the transaction once, we have $W_{l}, W_{nl} \geq B$. 
Besides, due to the protocol overhead, we have $W_{l}, W_{nl} > B$. As a result, $T_{max} < C/B$.
In other words, $C/B$ is the upper bound of the maximum throughput of any BFT protocol. 
%Besides, a scalable BFT protocols should minimize both $E_{l}$ and $E_{nl}$ by balancing the workloads among them. 

\subsection{Bottleneck of LBFT Protocols}\label{sec:leader-bottleneck-analysis}
In LBFT protocols, when making a consensus of a transaction, the leader is in charge of disseminating it to other $n-1$ replicas, while each non-leader replica proceeds it from the leader. 
Hence, the workloads of proceeding with the transaction for the leader and a non-leader replica are $W_l = B (n-1)$ and $W_{nl} = B$, respectively.
Furthermore, we have 
\[T_{max} = \min\left \{\frac{C}{B (n-1)}, \frac{C}{B}\right \} = \frac{C}{B (n-1)}.\]
The equation shows that with the increase of replicas, the maximum throughput of LBFT protocols will drop proportionally. 
Note that protocol overhead is not considered, which makes it easier to illustrate the unbalanced loads between the leader and non-leader replicas and to show the leader bottleneck. 

Next, we take PBFT~\cite{pbft} as a concrete example to show more details of the leader bottleneck. 
In PBFT the agreement of a transaction involves three phases: the pre-prepare, prepare, and commit phases. In particular, the leader first receives a transaction from a client and then disseminates the transaction to all other $n-1$ replicas in the pre-prepare phase.
In prepare and commit phases, each replica broadcasts their vote messages and receives all others' vote messages for reaching consensus.\footnote{In the implementation, the leader does not need to broadcast its votes in the prepare phase since the proposed transaction could represent the vote message.}
Let $\sigma$ denote the size of voting messages. 
The workloads for the leader and a non-leader replica are $ W_l = n B + 4 (n-1) \sigma$ and $ W_{nl} = B + 4 (n-1) \sigma$, respectively.
Finally, we can derive the maximum throughput of PBFT as
\[T_{max} = \min\left \{\frac{C}{nB + 4 (n-1) \sigma}, \frac{C}{B + 4 (n-1) \sigma}\right \}.\]
The equations show that both the dissemination of the transaction and vote messages limit the throughput. 
Besides, we can see that when processing a transaction, each replica has to process  $4 (n-1)$ vote messages, which leads to high protocol overhead. 
To address this, multiple transactions can be batch into a proposal (e.g., forming a block) to amortize the protocol overhead.  
For example, let $K$ denote the size of a proposal, and the maximum throughput of PBFT when adopting batch strategy is
\[T_{max} = \frac{K}{B} \times \min\left \{\frac{C}{nK + 4 (n-1)\sigma}, \frac{C}{K + 4 (n-1) \sigma}\right \}.\]
When $K$ is large (i.e., $K \gg \sigma$), we have $\frac{C}{n K + 4 (n-1) \sigma} \approx \frac{C}{n K}$ and $T_{max} = \frac{C}{nB}$. This shows that the maximum throughput drops with the increasing number of replicas, and the dissemination of the proposal by the leader is still the bottleneck. 
In other words, batching strategy cannot address the scalability issues of LBFT protocols. 
What is more, several state-of-the-art LBFT protocols such as HotStuff~\cite{hotstuff} achieve the linear message complexity by removing the $(n-1)$ factor from the $(n-1)\sigma$ overhead of non-leader replicas. 
However, this also cannot address the scalability issue since the proposal dissemination for the leader is still the dominating component. 

\subsection{Analysis of Using Shared Mempool}\label{sec:shared-mempool-analysis}
To address the leader bottleneck of LBFT protocols, our solution is to decouple the transaction dissemination with a consensus algorithm, by which dissemination workloads can be balanced among all replicas, leading to better utilization of replicas' processing capacities. 
In particular, to improve the efficiency of dissemination, transactions can be batched into microblocks, and replicas disseminate microblocks to each other. 
Each microblock is accompanied by a unique identifier, which can be generated by the hash function. 
Later, after a microblock is synchronized among replicas, the leader only needs to propose an identifier of the microblock. 
Since the unique mapping between identifiers and microblocks, ordered identifiers lead to a sequence of microblocks, which further determines a sequence of transactions. 

Next, we show how the above decoupling idea can address the leader bottleneck.  
We use $\gamma$ to denote the size of an identifier and $\eta$ to denote the size of a microblock.  
Given a proposal with the same size $K$, it can include $K/\gamma$ identifiers. Each identifier represents a microblock with $\eta/B$ transactions. Hence, a proposal represents $\frac{K}{\gamma} \times \frac{\eta}{B}$ transactions. 
As said previously, the $K/\gamma$ microblocks are disseminated by all non-leader replicas, so each non-leader replica has to disseminate $K/(\gamma (n-1))$ microblocks to all other replicas.
Correspondingly, each replica (including the leader) can receive $K/(\gamma (n-1))$ microblocks from $n-1$ non-leader replicas. 
Hence, the workload for the leader is 
\[W_l = (n-1) \frac{K \eta}{\gamma (n-1)} + (n-1)K = \frac{K\eta}{\gamma} + (n-1)K,\]
where $(n-1)K$ is the workload for disseminating the proposal. Similarly, the workload for a non-leader replica is 
\[W_l = n \frac{K\eta}{\gamma (n-1)} + (n-2) \frac{K\eta}{\gamma (n-1)} + K = \frac{2K\eta}{\gamma} + K,\]
where $K$ is the workload for receiving a proposal from the leader.
Finally, we can derive the maximum throughput as  
\[T_{max} = \frac{K \eta}{\gamma B} \times \min\left \{\frac{C}{(K\eta)/\gamma + (n-1)K}, \frac{C}{(2K\eta)/\gamma + K}\right \}.\]
To make the throughout maximum, we can adjust $\eta$ and $\gamma$ to balance the workloads of the leader and non-leader replicas. This is $\frac{2K\eta}{\gamma} + K = \frac{K\eta}{\gamma} + (n-1)K$, and we have $\eta = (n-2) \gamma$.
Finally, we can obtain the maximum throughput is $T_{max} = \frac{ C(n-2)}{ B (2n-3)}$. 
Particularly, when $n$ is large, we have $T_{max} \approx \frac{C}{2B}$. The result is optimal since given a transaction, it has to be sent and received $n$ times (one for each replica), which leads to about $2 n B$ workload, and the total processing capacities of all replicas is $nC$.

% \subsection{Analysis of Stratus}
% The above results are obtained in the optimal case, i.e., without considering protocol overhead and unsynchronized microblocks. 
% In particular, when the leader proposes microblocks for consensus, some microblocks may not arrive at all non-leader replicas. 
% To address this, the leader has to re-transmit these microblocks, which will consume the processing capacity of the leader and make the workloads unbalanced again~\cite{leopard}. 
% This method is referred to as \emph{negative-fetching}. 
% By contrast, another alternative is to use the reliable broadcast, by which the leader wait the microblocks to be synchronized~\cite{}. 
% However, this method introduce both the protocol overhead (i.e., $0(n^2)$ message complexity for reliable broadcast) and latency (i.e., waiting period for microblocks becoming synchronized). 
% This method is referred to as \emph{killing-fetching}.  

% In Stratus, we introduce timestamps and ACKs to measure the synchronization status of microblocks. We aim to achieve a balance between \emph{negative-fetching} and \emph{killing-fetching} methods. 

\end{document}